
\documentclass[pdflatex]{sn-jnl2_new}
\usepackage{natbib}
\jyear{2022}%

\usepackage{adjustbox}
\usepackage{physics}
\usepackage{dsfont}
\usepackage[english]{babel}
\usepackage{bbold}

\theoremstyle{thmstyleone}%
\newtheorem{theorem}{Theorem}
\newtheorem{corollary}[theorem]{Corollary}
\newtheorem{proposition}[theorem]{Proposition}%

\theoremstyle{thmstyletwo}%
\newtheorem{remark}{Remark}%

\theoremstyle{thmstylethree}%

\usepackage{comment}
\raggedbottom
\DeclareMathOperator{\sgn}{\text{sgn}}

\DeclareMathOperator{\vbold}{\mathbf{v}}
\DeclareMathOperator{\xbold}{\mathbf{x}}
\DeclareMathOperator{\zbold}{\mathbf{z}}
\DeclareMathOperator{\ybold}{\mathbf{y}}

\DeclareMathOperator{\rbold}{\mathbf{r}}
\DeclareMathOperator{\wbold}{\mathbf{w}}

\DeclareMathOperator{\bbold}{\mathbf{b}}
\DeclareMathOperator{\xibold}{\boldsymbol{\xi}}

\DeclareMathOperator{\gammabold}{\boldsymbol{\gamma}}

\DeclareMathOperator{\thetabold}{\boldsymbol{\theta}}

\DeclareMathOperator{\psibold}{\boldsymbol{\psi}}
\DeclareMathOperator{\varepsilonbold}{\boldsymbol{\varepsilon}}
\DeclareMathOperator{\sbold}{\mathbf{s}}
\DeclareMathOperator{\hbold}{\mathbf{h}}
\DeclareMathOperator{\Kbold}{\mathbf{K}}
\DeclareMathOperator{\Mbold}{\mathbf{M}}
\DeclareMathOperator{\Jbold}{\mathbf{J}}
\DeclareMathOperator{\Bbold}{\mathbf{B}}
\DeclareMathOperator{\Abold}{\mathbf{A}}
\DeclareMathOperator{\Sbold}{\mathbf{S}}
\DeclareMathOperator{\Dbold}{\mathbf{D}}
\DeclareMathOperator{\Ebold}{\mathbf{E}}
\DeclareMathOperator{\Fbold}{\mathbf{F}}
\DeclareMathOperator{\Hbold}{\mathbf{H}}
\DeclareMathOperator{\Gbold}{\mathbf{G}}
\DeclareMathOperator{\Qbold}{\mathbf{Q}}

\DeclareMathOperator{\Nbold}{\mathbf{N}}

\DeclareMathOperator{\Zbold}{\mathbf{Z}}
\DeclareMathOperator{\Ibold}{\mathbf{I}}

\DeclareMathOperator{\Lbold}{\mathbf{L}}

\DeclareMathOperator{\Rbold}{\mathbf{R}}
\DeclareMathOperator{\Sigmabold}{\mathbf{\Sigma}}
\DeclareMathOperator{\Lambdabold}{\mathbf{\Lambda}}
\DeclareMathOperator{\0bold}{\mathbf{0}}

\DeclareMathOperator{\bbR}{\mathds{R}}
\DeclareMathOperator{\bbN}{\mathds{N}}
\DeclareMathOperator{\bbC}{\mathds{C}}
\newcommand{\cov}{\mathrm{Cov}}
\newcommand\scalemath[2]{\scalebox{#1}{\mbox{\ensuremath{\displaystyle #2}}}}


\newcommand{\underbrset}[2]{\underset{#1}{\underbrace{#2}}}

\usepackage{graphicx} 
\usepackage[normalem]{ulem}

\begin{document}

\title{The SPDE approach for spatio-temporal datasets with advection and diffusion}

\author*[1]{\fnm{Lucia} \sur{Clarotto}}\email{lucia.clarotto@agroparistech.fr}

\author[2]{\fnm{Denis} \sur{Allard}}
\equalcont{\small These authors contributed equally to this work.}

\author[3]{\fnm{Thomas} \sur{Romary}}
\equalcont{\small These authors contributed equally to this work.}

\author[3]{\fnm{Nicolas} \sur{Desassis}}
\equalcont{\small These authors contributed equally to this work.}

\affil[1]{\small \orgdiv{UMR MIA Paris-Saclay, AgroParisTech, INRAE}, \orgname{Université Paris-Saclay}, \orgaddress{\city{Palaiseau}, \postcode{91120}, \country{France}}}
\affil[2]{\small \orgdiv{Biostatistiques et Processus Spatiaux (BioSP)}, \orgname{INRAE}, \orgaddress{\city{Avignon}, \postcode{84914}, \country{France}}}
\affil*[3]{\small \orgdiv{Centre for geosciences and
geoengineering}, \orgname{Mines Paris, PSL University}, \orgaddress{\city{Fontainebleau}, \postcode{77300}, \country{France}}}

\abstract{In the task of predicting spatio-temporal fields in environmental science using statistical methods, introducing statistical models inspired by the physics of the underlying phenomena that are numerically efficient is of growing interest. Large space-time datasets call for new numerical methods to efficiently process them. The Stochastic Partial Differential Equation (SPDE) approach has proven to be effective for the estimation and the prediction in a spatial context. We present here the advection-diffusion SPDE with first-order derivative in time which defines a large class of nonseparable spatio-temporal models. A Gaussian Markov random field approximation of the solution to the SPDE is built by discretizing the temporal derivative with a finite difference method (implicit Euler) and by solving the spatial SPDE with a finite element method (continuous Galerkin) at each time step. The ``Streamline Diffusion'' stabilization technique is introduced when the advection term dominates the diffusion. Computationally efficient methods are proposed to estimate the parameters of the SPDE and to predict the spatio-temporal field by kriging, as well as to perform conditional simulations. The approach is applied to a solar radiation dataset. Its advantages and limitations are discussed.}

\keywords{Spatio-temporal statistics; Stochastic Partial Differential Equations; Advection-diffusion; Geostatistics; Solar radiation.}

\maketitle

\clearpage


\section{Introduction}
\label{sec:intro}

Many areas of environmental science seek to predict a space-time variable of interest from observations at scattered points in the space cross time domain of study, e.g., among other possible applications, wind
prediction \citep{lenzi2020,huang22}, precipitation forecasting \citep{sigrist2011}, urban air quality inference \citep{paciorek2009}. Among modern techniques proposing efficient methods for estimation and prediction in a spatio-temporal framework, there is a distinction between two possible ways of constructing and treating spatio-temporal models \citep{wikle2010}: either one follows the traditional geostatistical paradigm, using joint space-time covariance functions (see for example \citet{cressie99}, \citet{gneiting2002}, \citet{stein2005}, as well as the recent reviews \citet{porcu202130}, \citet{chen2021}), or one uses dynamical models, including functional time series of surfaces, see for example \citet{wikle99}, \citet{sigrist2012} and
\citet{martinez2023}.

While the theoretical aspects of spatio-temporal geostatistics are well developed \citep{cressie2011}, their implementation faces difficulties. The geostatistical paradigm is computationally expensive for large spatio-temporal datasets, due to the factorization of dense covariance matrices, whose complexity scales with the cube of the number of observations. This well known problem is often referred to as the ``big $n$ problem''  \citep{banerjee2014}. Separable space-time covariance functions have often been used to take advantage of their computational convenience, even when they are not realistic in describing the processes due to the absence of space-time interaction. In most applications, separable models show poorer predictions than nonseparable models, see references above. Recent studies have focused on constructing nonseparable models, which are physically more realistic, albeit computationally more expensive, see \citet{gneiting2002}, \citet{porcu2006}, \citet{salvana2021} and \citet{bourotte2016flexible}, \citet{allard2022fully} in a multivariate context. Nonseparable space-time covariance models can be constructed from Fourier transforms of permissible spectral densities, mixtures of separable models, and partial differential equations (PDEs) representing physical laws \citep{vergara2022general,lindgren2022}. They can be fully symmetric or asymmetric, stationary or nonstationary, univariate or multivariate, in the Euclidean space or on the sphere. See \citet{porcu202130} and \citet{chen2021} for recent comprehensive reviews.

In this paper, we follow the dynamic approach that makes use of physical laws and study models which are defined through Stochastic Partial Differential Equations (SPDEs), where the stochasticity is obtained by adding a random noise as a forcing term. The SPDE approach relies on the representation of a continuously indexed Gaussian Random Field (GRF) as a discretely indexed random process, i.e. a Gaussian Markov Random Field (GMRF, see \citet{rue2005}). Passing from a GRF to a GMRF, the covariance function and the dense covariance matrix are substituted respectively by a neighborhood structure and a sparse precision matrix. Using GMRFs with sparse precision matrices implies computationally efficient numerical methods, especially for matrix factorization. The link between GRF and GMRFs in the purely spatial case has been pioneered by \citet{lindgren2011}, who proposed to construct a GMRF representation of the spatial Mat\'ern field on a triangulated mesh of the domain through the discretization of a diffusion SPDE with a Finite Element Method (FEM). We refer to \citet{bakka2022tuto} for a simple explanation of the FEM applied to the spatial SPDE and to Section \ref{sec:discretization} for a detailed generalization to the spatio-temporal SPDE. 

In the spatial framework, major mathematical and algorithmic advances in the SPDE approach have been made \citep{fulgstad2015, pereira2019, pereira_desassis_allard_2022}, making it possible to efficiently process very large datasets, even in the presence of nonstationarities and varying local anisotropies. The development of SPDE-based approaches to Gaussian processes has led to several practical solutions, among which we find the R package for approximate Bayesian inference R-INLA \citep{rue2009, lindgren2015} that uses SPDEs to sample from spatial and spatio-temporal models. 

When generalizing to the spatio-temporal framework, a direct space-time formulation of the SPDE approach was first suggested in \citet{lindgren2011}, without any precise detail on estimation and prediction. In \citet{cameletti2011}, the SPDE approach was coupled with an AR(1) model in time, leading to a separable space-time model. Nonseparable spatio-temporal models have been elaborated in \citet{sarkka2013}, \citet{krainski2019} and \citet{lindgren2022diffusion} as a spatio-temporal generalization of the diffusion-Mat\'ern model of \citet{lindgren2011}.  
In the approaches overviewed above, the space-time processes are symmetrical in the sense that the spatio-temporal covariance does not change when the sign of the space and/or time lag changes. However, atmospheric and geophysical processes are often asymmetric due to transport effects, such as air and water flows. \citet{sigrist2015} built non-symmetrical and nonseparable space-time Gaussian models as a solution to an advection-diffusion SPDE with computationally efficient algorithms for statistical estimation using fast Fourier transforms and Kalman filters. \citet{sharrock2022} used a similar method, but in an joint online parameter estimation and optimal sensor placement problem. \citet{liu2020} extended this approach to spatially-varying advection-diffusion and non-zero mean source-sink, leading to a space-time covariance which is nonstationary in space. The applicability of these approaches remains difficult however, especially with scattered data, as it relies on the Fourier transform of the data.
\citet{vergara2022general} defined new spatio-temporal models incorporating the physical processes linked to the studied phenomena (advection, diffusion, etc.), but  the estimation of the parameters and the conditioning to the observed data remained unaddressed. 

In this work, we propose a new and efficient  approach for dealing with spatio-temporal SPDEs that includes both a diffusion and an advection term. In contrast to \citet{sigrist2015} and \citet{liu2020}, we make use of the sparse formulation of the spatio-temporal field which is the approximate solution of the SPDE obtained by a combination of FEM and finite differences (FD). This sparse formulation allows fast algorithms for parameter estimation and spatio-temporal prediction. We also treat the case of an advection-dominated SPDE, by introducing the Streamline Diffusion (SD) stabilization term in the SPDE \citep{hughes79}. To the best of our knowledge, this work is the first statistical FEM/FD implementation of spatio-temporal SPDEs with advection.

The paper is organized as follows: Section \ref{sec:spatiotemp_spde} first presents background material on the spatio-temporal SPDE approach. The spatio-temporal advection-diffusion model developed in this paper is presented, along with its discretization. Moreover, the stabilization of advection-dominated SPDEs is introduced. Section \ref{sec:estimation} explores fast and scalable estimation methods, kriging formula for prediction and conditional simulations. Section \ref{sec:application} presents an application of the proposed spatio-temporal SPDE approach to a solar radiation dataset. Section \ref{sec:discussion} discusses the advantages and the limitations of the approach and opens the way to further works.  


\section{The spatio-temporal advection-diffusion SPDE and its discretization}
\label{sec:spatiotemp_spde}

\subsection{Background}
\label{sec:background_spde}

In the SPDE representation, GRFs on $\mathds{R}^d$ are viewed as solutions to specific stochastic partial differential equations  \citep{whittle54,whittle63}. In particular, Gaussian Whittle-Mat\'ern fields, analyzed in details in \citet{lindgren2011} and reviewed in \citet{lindgren2022}, are solutions to
\begin{align}
(\kappa^2 - \Delta)^{\alpha/2} X(\mathbf{\cdot}) = \tau W(\mathbf{\cdot}),
\label{eq:spatial_spde}
\end{align} 
with $\alpha > d/2$ and $\tau >0$. $\Delta = \sum_{i=1}^d \frac{\partial^2}{\partial s_i^2}$ is the Laplacian operator and $W(\cdot)$ is a standard spatial Gaussian white noise, whose definition is briefly recalled.

A white noise $W(\cdot)$ is as a Generalized Random Field (GeRF) that associates to any function $\phi \in L^2(\bbR^d)$ a random variable $W(\phi) \in \bbR$, that satisfies
$$\mathds{E}[W(\phi)] = 0, \quad \forall \phi \in L^2(\bbR^d)$$ 
and
\begin{equation}
\cov[W(\phi_1), W(\phi_2)]=\int_{\bbR^d} \phi_1(\sbold) \phi_2(\sbold) \dd{\sbold}, \quad \forall \phi_1, \phi_2 \in L^2(\bbR^d).
\label{eq:white_noise}
\end{equation}
If, moreover, for any $m\geq1$ and any linearly independent $\phi_1,\dots,\phi_m \in L^2(\bbR^d)$, the random vector $[W(\phi_1),\dots, W(\phi_m)]^\top$ is a Gaussian vector, then $W$ is called \textit{Gaussian white noise}. 

If $\{Z_i\}_{i\in\bbN}$ is a sequence of independent, standard Gaussian variables, then the function $W$ defined over $L^2(\bbR^d)$ by
$$W(\phi) =\sum_{j\in\bbN} Z_j \int_{\bbR^d} \phi e_j \dd{\sbold}, \quad \forall \phi \in L^2(\bbR^d),$$
where $\{e_j\}_{j\in\bbN}$ denotes an orthonormal basis of $L^2(\bbR^d)$, is a Gaussian white noise on $\bbR^d$. 

In principle, GeRFs have only meaning when applied to test functions in some particular functional space, and not necessarily when evaluated in points of the space, but, for an easier reading, we will allow ourselves to write $W(\sbold)$ and $X(\sbold)$.

The covariance function of the Gaussian Whittle-Mat\'ern field solution to Equation \eqref{eq:spatial_spde} is the well known Mat\'ern covariance function
\begin{equation}
\cov(\hbold)=\sigma^2 C^M_\nu(\kappa \norm{\hbold})= \frac{\sigma^2}{2^{\nu-1}\Gamma(\nu)}\left(\kappa \norm{\hbold}\right)^\nu \mathcal{K}_\nu\left(\kappa \norm{\hbold}\right),
\label{eq:matern_spatial} 
\end{equation}
with smoothness parameter $\nu=\alpha-d/2>0$, scale parameter $\kappa$ and variance $\sigma^2=\tau^{2}(4\pi)^{-d/2}\Gamma(\nu) \Gamma(\nu+d/2)^{-1}\kappa^{-2\nu}$. $\mathcal{K}_\nu$ is the modified $2^\text{nd}$ order Bessel function and $\hbold = \mathbf{s}-\mathbf{s}'$ is the spatial lag between two locations $\sbold$ and $\sbold'$ in $\mathds{R}^d$. In particular, when $\nu=1/2$, we get the exponential covariance function and when $\nu\rightarrow +\infty$, after proper renormalization, \eqref{eq:matern_spatial} tends to the Gaussian covariance function \citep{genton2001}.

In \citet{lindgren2011}, the smoothness parameter $\nu$ considered in the Mat\'ern covariance function corresponds to integer values of $\alpha$. When non-integer values of $\alpha$ are introduced in the modeling, the SPDE is said to be fractional. Recent reviews of results and applications of the fractional SPDE approach are available in \citet{xiong2024,bolin2019, roques2022}, but this case will not be treated further in this work.

When generalizing to spatio-temporal processes $X(t,\sbold)$, we consider the framework proposed in  \citet{vergara2022general} for extending the SPDE approach to a wide class of linear spatio-temporal SPDEs. Let us denote $\xibold \in \bbR^d$ a spatial frequency and $\omega \in \bbR$ a temporal frequency. The space-time white noise with unit variance, denoted $W(t,\sbold)$, is characterized by its spectral measure $d\mu_W(\omega,\xibold) = (2\pi)^{-(d+1)} d\xibold d\omega$.
New spatio-temporal models were obtained from known PDEs describing physical processes, such as diffusion, advection, and oscillations with stochastic forcing terms. In particular, \citet{vergara2022general} provided sufficient conditions to the existence and uniqueness of stationary solutions to 
\begin{equation}
    \left[\frac{\partial^\beta}{\partial t^\beta} + \mathcal{L}_g\right] X(t,\sbold) = W(t,\sbold),
    \label{eq:evol_eq}
\end{equation}
with $\beta > 0$. In \eqref{eq:evol_eq}, the spatial operator $\mathcal{L}_g$ is defined using the spatial Fourier transform on $\bbR^d$, denoted $\mathcal{F}_S$,
$$\mathcal{L}_g(\cdot)=\mathcal{F}_S^{-1}(g \ \mathcal{F}_S(\cdot)),$$
where  $g : \bbR^d \to \bbC$ is a sufficiently regular and Hermitian-symmetric function called the \textit{symbol function} of the operator $\mathcal{L}_g$. The temporal operator $\frac{\partial^\beta}{\partial t^\beta}$ is 
$$\frac{\partial^\beta}{\partial t^\beta}(\cdot)=\mathcal{F}_T^{-1}((i\omega)^\beta\mathcal{F}_T(\cdot)),$$
where $\mathcal{F}_T$ is the temporal Fourier transform on $\mathds{R}$ and where we have used the symbol function over $\bbR$
$$\omega \mapsto (i\omega)^\beta = \lvert \omega\rvert^\beta e^{i\sgn(\omega)\beta\pi/2}.$$
The spatio-temporal symbol function of the operator involved in \eqref{eq:evol_eq} is thus
\begin{equation*}
    (\omega,\xibold) \mapsto (i\omega)^\beta + g(\xibold) = \lvert\omega\rvert^\beta \cos\left(\frac{\beta \pi}{2}\right) + g_R(\xibold) + i\left(\sgn(\omega)\lvert\omega\rvert^\beta \sin\left(\frac{\beta \pi}{2}\right) + g_I(\xibold)\right),
\end{equation*}
where $g_R$ and $g_I$ are the real and imaginary part of the spatial symbol function $g(\xibold)$. If $\lvert g_R\rvert$ is inferiorly bounded by the inverse of a strictly positive polynomial and $g_R\cos\left(\frac{\beta \pi}{2}\right)\geq 0$, Theorem 1 and Proposition 3 in \citet{vergara2022general} state that \eqref{eq:evol_eq} admits a unique stationary solution for every arbitrary $g_I$ function . 

\subsection{The spatio-temporal advection-diffusion SPDE}
\label{sec:model}
The advection-diffusion equation is a Partial Differential Equation (PDE) that describes physical phenomena where particles, energy, or other physical quantities evolve inside a physical system due to two processes: diffusion and advection. Advection represents the mass transport due to the average velocity of all particles, and diffusion represents the mass transport due to the instantaneously varying velocity of individual particles. 
In this paper, we study the advection-diffusion SPDE on the domain $[0,T]\times\bbR^d$ that writes
\begin{equation}
    \left[\frac{\partial}{\partial t} + \frac{1}{c}(\kappa^2  - \nabla \cdot \Hbold\nabla)^{\alpha} + \frac{1}{c}\gammabold \cdot \nabla \right] X(t,\sbold) = \frac{\tau}{\sqrt{c}} Z(t,\sbold),
    \label{eq:adv_diff}    
\end{equation}
where
\begin{itemize}
    \item the operator $\nabla \cdot \Hbold\nabla$ is a \textit{diffusion} term that can incorporate \textit{anisotropy} in the matrix $\Hbold$. When the field is isotropic, i.e. when $\Hbold = \lambda\Ibold$, this term reduces to the Laplacian operator $\lambda\Delta$;
    \item the operator $\gammabold \cdot \nabla$ models the \textit{advection}, $\gammabold \in \bbR^d$ being a velocity vector;
    \item $\alpha \geq 0$ relates to the smoothness of $X(t,\cdot)$, $\kappa^2>0$ accounts for \textit{damping} and $c$ is a positive time-scale parameter;
    \item $\tau \geq 0$ is a standard deviation factor and $Z$ is a stochastic forcing term, detailed below. 
\end{itemize}

This equation was mentioned in \citet{lindgren2011}, \citet{vergara2022general} and \citet{lindgren2022diffusion}, and was analyzed using spectral approaches in \citet{sigrist2015} and \citet{liu2020}. The term $Z(t,\sbold)$ is assumed to be of the form
\begin{equation}
Z(t,\mathbf{s}) = W_T(t) \otimes  Z_S(\mathbf{s}),
\label{eq:noise}
\end{equation}
i.e., a space-time separable stochastic (Generalized) random function given as the tensor product of a temporal Gaussian white noise $W_T$ and a spatial noise $Z_S$. $Z_S$ is often chosen to be a spatial Gaussian white noise, denoted $W_S$ in this case. To ensure a sufficient regularity for $Z$, $Z_S$ can alternatively be a \textit{colored noise}, such as for example the solution to the spatial Whittle-Mat\'ern SPDE \citep{lindgren2011}
\begin{equation}
(\kappa^2 - \nabla \cdot \Hbold\nabla)^{\alpha_S/2} Z_S(\mathbf{s}) = W_S(\mathbf{s}),
\label{eq:col_noise}
\end{equation}
where $W_S$ is a Gaussian white noise. Note that the parameters $\kappa^2$ and $\Hbold$ in the noise term have been set identical to those in the diffusion term in the left-hand-side of \eqref{eq:adv_diff} to ensure that the spatial marginalization of the process is a Mat\'ern field, as detailed below. A relation can be found between the SPDE notation of our paper and the more classical notation of infinite dimensional  SDEs of \citet{daprato1992}. This relation is explained in Appendix \ref{sec:daprato}.

\medskip

When $\alpha>0$, $X(t,\sbold)$ is a stationary nonseparable spatio-temporal field with covariance function $\cov(u,\hbold)$, with $(u,\hbold) \in \mathds{R}\times\mathds{R}^d$. 
The advection-diffusion equation \eqref{eq:adv_diff} is a particular first-order evolution model as in Equation \eqref{eq:evol_eq} with $\beta=1$. Its spatial symbol function 
$$g(\xibold) = \frac{1}{c}\left[(\kappa^2  + \xibold^\top \Hbold \xibold)^\alpha + i\gammabold^\top\xibold\right]$$
verifies the sufficient condition for existence and uniqueness of a stationary solution recalled at the end of Section \ref{sec:background_spde}. We define the spatial trace of $X$ as the spatial random field $X(t,\cdot)$ at any $t \in [0,T]$.
\citet{vergara2022general} showed that the advection term does not affect the spatial trace of the solution. For some specific values of the parameters, the spatial trace of the solution to \eqref{eq:adv_diff} is a Mat\'ern field, as detailed in Proposition \ref{prop:spatial_trace}. In the following $\lvert \Hbold \rvert$ denotes the determinant of the square matrix $\Hbold$.

\begin{proposition}
\label{prop:spatial_trace}
Let $Z(t,\sbold)$ be a spatio-temporal noise colored in space with $Z_S(\sbold)$ satisfying \eqref{eq:col_noise}, and let  $\alpha_{\text{tot}}=\alpha + \alpha_S$. If $\alpha_{\text{tot}}>d/2$, the spatial trace of the stationary solution $X(t,\sbold)$ of the SPDE \eqref{eq:adv_diff} is the Gaussian Mat\'ern field with covariance 
\begin{equation}
	\cov(0,\hbold) =  \frac{\tau^2\Gamma(\alpha_{\text{tot}}-d/2)}{2\Gamma(\alpha_{\text{tot}})(4\pi)^{d/2}\kappa^{2(\alpha_{\text{tot}}-d/2)}\lvert\Hbold
\rvert^{1/2}}
C^M_{\alpha_{\text{tot}}-d/2}\left(\kappa \norm{\Hbold^{-1/2}\hbold}\right),
	\label{eq:spat_trace}
\end{equation}
where $\hbold=\sbold-\sbold'$ is the spatial lag and $C^M_{\alpha_{\text{tot}}-d/2}(\cdot)$ is the unit variance and scale Mat\'ern covariance function defined in \eqref{eq:matern_spatial} with smoothness parameter equal to $\nu = \alpha_{\text{tot}}-d/2$. 
\end{proposition}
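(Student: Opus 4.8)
The plan is to work entirely in the spectral domain, exploiting the fact that \eqref{eq:adv_diff} is a linear SPDE of the evolution type \eqref{eq:evol_eq} with $\beta = 1$, whose stationary solution is characterized by its spatio-temporal spectral density. Writing the full symbol as $i\omega + g(\xibold)$ with $g_R(\xibold) = \frac{1}{c}(\kappa^2 + \xibold^\top\Hbold\xibold)^\alpha$ and $g_I(\xibold) = \frac{1}{c}\gammabold^\top\xibold$, the spectral density of $X$ is the spectral density of the forcing term divided by $\lvert i\omega + g(\xibold)\rvert^2$. The forcing $\frac{\tau}{\sqrt c}Z$ is separable, with $W_T$ contributing a flat temporal spectrum and $Z_S$ contributing, by \eqref{eq:col_noise}, the spatial factor $(\kappa^2 + \xibold^\top\Hbold\xibold)^{-\alpha_S}$. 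Hence, up to the fixed normalization constants carried by the white-noise spectral measure,
\begin{equation*}
f_X(\omega,\xibold) \;\propto\; \frac{\tau^2}{c}\,\frac{1}{(\kappa^2 + \xibold^\top\Hbold\xibold)^{\alpha_S}}\,\frac{1}{g_R(\xibold)^2 + (\omega + g_I(\xibold))^2}.
\end{equation*}

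The spatial trace $X(t,\cdot)$ is obtained by marginalizing over time, so its covariance is the inverse spatial Fourier transform of $f_X^S(\xibold) = \int_{\bbR} f_X(\omega,\xibold)\dd{\omega}$. The crucial step, and the one that makes the advection term disappear, is the elementary integral
\begin{equation*}
\int_{\bbR} \frac{\dd{\omega}}{g_R(\xibold)^2 + (\omega + g_I(\xibold))^2} = \frac{\pi}{g_R(\xibold)},
\end{equation*}
where the shift of the integration variable by $g_I(\xibold)$, i.e. by the advection symbol $\frac{1}{c}\gammabold^\top\xibold$, leaves the value unchanged. Substituting $g_R(\xibold) = \frac{1}{c}(\kappa^2 + \xibold^\top\Hbold\xibold)^\alpha$ and combining with the colored-noise factor yields
\begin{equation*}
f_X^S(\xibold) \;\propto\; \frac{\tau^2}{(\kappa^2 + \xibold^\top\Hbold\xibold)^{\alpha + \alpha_S}} \;=\; \frac{\tau^2}{(\kappa^2 + \xibold^\top\Hbold\xibold)^{\alpha_{\text{tot}}}},
\end{equation*}
which is precisely the spectral density of an anisotropic Whittle--Mat\'ern field.

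It then remains to invert this spectral density and match constants. I would first reduce the anisotropic case to the isotropic one by the change of variable $\xibold \mapsto \Hbold^{1/2}\xibold$, which introduces the Jacobian factor $\lvert\Hbold\rvert^{-1/2}$ and replaces $\norm{\hbold}$ by $\norm{\Hbold^{-1/2}\hbold}$ in the resulting covariance. The condition $\alpha_{\text{tot}} > d/2$ guarantees that $f_X^S$ is integrable, so the inverse transform is well defined and, by the classical Whittle--Mat\'ern spectral pair underlying \eqref{eq:matern_spatial}, produces the unit-variance-and-scale Mat\'ern function $C^M_{\alpha_{\text{tot}}-d/2}(\kappa\norm{\Hbold^{-1/2}\hbold})$ with $\nu = \alpha_{\text{tot}} - d/2$. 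I expect the main obstacle to lie not in any single step but in correctly tracking the $(2\pi)^{-(d+1)}$ white-noise normalization, the factors of $c$ and $\pi$ from the $\omega$-integration, and the $\Gamma$-factors of the Mat\'ern spectral transform, so that the prefactor assembles exactly into $\frac{\tau^2\Gamma(\alpha_{\text{tot}}-d/2)}{2\Gamma(\alpha_{\text{tot}})(4\pi)^{d/2}\kappa^{2(\alpha_{\text{tot}}-d/2)}\lvert\Hbold\rvert^{1/2}}$. Conceptually, however, everything of substance is in the $\omega$-integration, which shows transparently that only the real part $g_R$ of the symbol, namely diffusion and damping, survives the temporal marginalization, while the imaginary part $g_I$, the advection, is integrated away.
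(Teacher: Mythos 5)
Your proposal is correct and follows essentially the same route as the paper's proof in Appendix B: integrate the spatio-temporal spectral density over $\omega$, observe that the advection only shifts the temporal frequency and hence drops out, reduce the anisotropic case by the substitution $\xibold\mapsto\kappa\Hbold^{-1/2}\wbold$, and identify the resulting spectrum $\propto(\kappa^2+\xibold^\top\Hbold\xibold)^{-\alpha_{\text{tot}}}$ with the Whittle--Mat\'ern pair. The only cosmetic difference is that you make the disappearance of $g_I$ explicit through the translation invariance of the $\omega$-integral, whereas the paper cites \citet{vergara2022general} for that fact and writes the spectral density without the advection term from the outset.
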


Proposition \ref{prop:spatial_trace} is adapted from Proposition 1 in \citet{lindgren2022diffusion}. A proof is reported in Appendix \ref{sec:A0}. The model reduces to a separable one in a particular case stated in the corollary below.

\begin{corollary}
\label{prop:spatial_trace_sep}
Let the coefficients of the SPDE \eqref{eq:adv_diff} be such that $\alpha=0$ and $\gammabold=\0bold$; the spatial operator applied to the spatio-temporal field $X(t,\sbold)$ is then the constant value $c^{-1}$. Let $Z(t,\sbold)$ be a spatio-temporal noise colored in space, with $Z_S(\sbold)$ satisfying \eqref{eq:col_noise}. If $\alpha_S>d/2$, the stationary solution of the SPDE is a separable spatio-temporal field with covariance
$$\cov(u,\hbold) = \frac{\tau^{2}\Gamma(\alpha_S -d/2)}{2\Gamma(\alpha_S)(4\pi)^{d/2}\kappa^ {2(\alpha_S-d/2)}\lvert\Hbold
\rvert^{1/2}}
C^M_{\alpha_{S}-d/2}\left(\kappa \norm{\Hbold^{-1/2}\hbold}\right) \exp{\frac{u}{c}},$$
with smoothness parameter equal to $\nu=\alpha_S-d/2$.
\end{corollary}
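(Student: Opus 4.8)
The plan is to exploit the fact that, once $\alpha=0$ and $\gammabold=\0bold$ are imposed, the left-hand side of \eqref{eq:adv_diff} contains no spatial differential operator at all, so the equation degenerates into a pure temporal evolution driven by a noise that carries the entire spatial structure. First I would substitute the two parameter values: since $(\kappa^2-\nabla\cdot\Hbold\nabla)^0$ is the identity and the advection term $\tfrac{1}{c}\gammabold\cdot\nabla$ vanishes, \eqref{eq:adv_diff} reduces to
\begin{equation*}
\left(\frac{\partial}{\partial t}+\frac{1}{c}\right)X(t,\sbold)=\frac{\tau}{\sqrt{c}}\,Z(t,\sbold),
\end{equation*}
which is a first-order evolution model as in \eqref{eq:evol_eq} with $\beta=1$ and constant spatial symbol $g(\xibold)\equiv 1/c$. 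Since $g_R=1/c>0$ is bounded below and $g_R\cos(\pi/2)=0\ge 0$, the existence and uniqueness criterion recalled at the end of Section~\ref{sec:background_spde} is satisfied, so a unique stationary solution exists.

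The core of the argument is a spectral factorization. Working with the spectral density of the stationary solution,
\begin{equation*}
f_X(\omega,\xibold)=\frac{(\tau^2/c)\,f_Z(\omega,\xibold)}{\lvert i\omega+1/c\rvert^2},
\end{equation*}
I would use that $Z=W_T\otimes Z_S$ is white in time and colored in space, so that $f_Z(\omega,\xibold)=f_{Z_S}(\xibold)=(\kappa^2+\xibold^\top\Hbold\xibold)^{-\alpha_S}$ is independent of $\omega$, while $\lvert i\omega+1/c\rvert^2=\omega^2+1/c^2$ is independent of $\xibold$. Hence
\begin{equation*}
f_X(\omega,\xibold)=\frac{\tau^2/c}{\omega^2+1/c^2}\cdot\frac{1}{(\kappa^2+\xibold^\top\Hbold\xibold)^{\alpha_S}},
\end{equation*}
a product of a function of $\omega$ alone and a function of $\xibold$ alone. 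A product spectral density inverts (by Fubini) to a product space-time covariance, which is exactly the definition of separability.

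It then remains to invert the two factors. The temporal factor is the spectral density of an Ornstein--Uhlenbeck process: the one-dimensional inverse transform of $(\tau^2/c)/(\omega^2+c^{-2})$ is $\tfrac{\tau^2}{2}e^{-\lvert u\rvert/c}$, giving the exponential temporal correlation $e^{-\lvert u\rvert/c}$. The spatial factor $(\kappa^2+\xibold^\top\Hbold\xibold)^{-\alpha_S}$ is precisely the one appearing in Proposition~\ref{prop:spatial_trace} with $\alpha_{\text{tot}}=\alpha+\alpha_S=\alpha_S$; its $d$-dimensional inverse transform is the anisotropic Mat\'ern covariance, and the hypothesis $\alpha_S>d/2$ is exactly what guarantees integrability of the spatial spectrum. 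Multiplying the two factors yields $\cov(u,\hbold)=\cov(0,\hbold)\,e^{-\lvert u\rvert/c}$, where $\cov(0,\hbold)$ is the spatial trace given by \eqref{eq:spat_trace} with $\alpha_{\text{tot}}=\alpha_S$, and this is the claimed expression. I expect the only delicate point to be the rigorous passage from the product form of $f_X$ to separability of the covariance, which rests on the tensor-product definition of $Z$ in \eqref{eq:noise}; everything else reduces either to the standard Cauchy-kernel integral in time or to the Mat\'ern computation already carried out for Proposition~\ref{prop:spatial_trace}. I also note that the exponential in the stated covariance should read $e^{-\lvert u\rvert/c}$ rather than $\exp(u/c)$ in order for it to be a valid, decaying temporal correlation.
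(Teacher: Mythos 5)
Your proof is correct and takes essentially the same route as the paper: the corollary follows from the spectral density written in the proof of Proposition \ref{prop:spatial_trace} by setting $\alpha=0$, so that it factorizes into a Cauchy kernel in $\omega$ (inverting to $\frac{\tau^2}{2}e^{-\lvert u\rvert/c}$) times the Mat\'ern spectrum in $\xibold$ already computed there. Your observation that the stated $\exp(u/c)$ should read $e^{-\lvert u\rvert/c}$ is also well taken.
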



\subsection{Discretization}
\label{sec:discretization}

The advection-diffusion SPDE in \eqref{eq:adv_diff} is discretized in time and space, using finite differences and a finite element method, respectively (from now on, this type of discretization will be noted as FEM/FD discretization). The temporal domain $[0,T]$ is discretized in $(N_T+1)$ regular time steps of length $dt = T/N_T$, and we note $t_k=kdt$ for $k\in\{0,\dots,N_T\}$. Since implicit solvers are usually less sensitive to numerical instability than explicit solvers, the implicit Euler scheme is chosen for the temporal discretization. This choice implies stability, hence convergence towards the stationary solution. We denote $X^{(k)}=X(t_k,\cdot)$ the temporal approximation of the spatial trace of the SPDE \eqref{eq:adv_diff} at time $t_k$. The FEM method for the spatial discretization is the continuous Galerkin method with Neumann Boundary Conditions as detailed in \citet{lindgren2011}.

The solution in two dimensions is now detailed. The solution in three dimensions involve geometrical technicalities, but is otherwise very similar. Let $\Omega \subset \mathds{R}^2$ be a compact and connected domain of $\mathds{R}^2$. $\Omega$ is meshed using a triangulation $\cal T$ with $N_S$ vertices $\{\sbold_1,\dots,\sbold_{N_S}\} \subset \Omega$. Let $ h := \max_{{\rm Tr} \in {\cal T}} h_{\rm Tr}$, where
$h_{\rm Tr}$ is the length of the longest side of the triangle ${\rm Tr} \in {\cal T}$. A first-order finite element representation $X_h$ of the solution to the spatial SPDE is a linear combination  $X_h = \sum_{i=1}^{N_S} x_i\psi_i$ of piecewise linear basis functions $\{\psi_i\}_{i=1}^{N_S}$, each $\psi_i$  being equal to $1$ at the vertex $\sbold_i$ and $0$ at all the other vertices. The weights $\{x_i\}_{i=1}^{N_S}$ define uniquely the values of the field at the vertices, while the values in the interior of the triangles are determined by linear interpolation. The continuous Galerkin solution is then obtained by finding the weights that fulfill the weak  formulation of Equation \eqref{eq:adv_diff} for test functions belonging to the space $\mathcal V$ spanned by $\{\psi_i\}_{i=1}^{N_S}$. 

\begin{proposition}
\label{prop:discr_adv_diff}
Let $X(t,\sbold)$ be the spatio-temporal process solution to Equation \eqref{eq:adv_diff} with $\alpha \in \{0,1\}$ and spatio-temporal white noise, i.e. $Z(t,\sbold) = W(t,\sbold) = W_T(t) \otimes  W_S(\mathbf{s})$. Let $\mathcal{T}$ be a triangulation of $\Omega$ and $\{\psi_i\}_{i=1}^{N_S}$ be the piecewise linear basis functions defined over $\mathcal{T}$. Let us define the mass matrix $\Mbold=[M_{ij}]_{i,j=1}^{N_S}$, the stiffness matrix $\Gbold=[G_{ij}]_{i,j=1}^{N_S}$, the advection matrix $\Bbold=[B_{ij}]_{i,j=1}^{N_S}$ and the matrix $\Kbold=[K_{ij}]_{i,j=1}^{N_S}$ as follows:
\begin{align*}
    M_{ij}&= \int_\Omega \psi_i(\sbold) \psi_j(\sbold)\dd{\sbold}, \nonumber \\
    G_{ij}&= \int_\Omega  \Hbold\nabla\psi_i(\sbold)\cdot \nabla\psi_j(\sbold)\,\dd{\sbold}, \\ 
    B_{ij}&= \int_\Omega  \gammabold\cdot\nabla\psi_i(\sbold)\psi_j(\sbold)\dd{\sbold}, \\
    K_{ij} &= (\kappa^2 M_{ij} + G_{ij})^\alpha.
\end{align*}
Then, at each time step, the continuous Galerkin finite element solution vector $\xbold^{(k+1)} = \{x_i^{(k+1)}\}_{i=1}^{N_S}$, for $k\in\{0,\dots,N_T\}$, satisfies
\begin{equation}
\left(\Mbold+\frac{dt}{c}(\Kbold+\Bbold)\right)\mathbf{x}^{(k+1)}=\Mbold\mathbf{x}^{(k)}+\frac{\tau\sqrt{dt}}{\sqrt{c}} \Mbold^{1/2}\mathbf{z}^{(k+1)},
\label{eq:implicit_scheme}
\end{equation}
where $\mathbf{z}^{(k+1)} \sim \mathcal{N}(\0bold,\Ibold_{N_S})$,  $\Mbold^{1/2}$ is any matrix such that $\Mbold^{1/2}\Mbold^{1/2} = \Mbold$ and $dt=T/N_T$. 
When the noise on the right-hand side is colored in space, i.e. $Z(t,\sbold) = W_T(t) \otimes  Z_S(\mathbf{s})$, the discretization reads
\begin{equation*}
	\left(\Mbold+\frac{dt}{c}(\Kbold+\Bbold)\right)\mathbf{x}^{(k+1)}=\Mbold\mathbf{x}^{(k)}+\frac{\tau\sqrt{dt}}{\sqrt{c}} \Mbold \Lbold_S^\top\mathbf{z}^{(k+1)},
\end{equation*}
where $\Lbold_S$ is the Cholesky decomposition of $\Qbold_{S}^{-1}$, the covariance matrix of the discretized solution $\Zbold_S$ of the spatial SPDE \eqref{eq:col_noise}, obtained with the continuous Galerkin FEM \citep{lindgren2011}.
\end{proposition}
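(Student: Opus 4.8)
The plan is to derive \eqref{eq:implicit_scheme} by composing two discretizations of \eqref{eq:adv_diff}: an implicit Euler step in time and the continuous Galerkin weak formulation in space, keeping the deterministic operators and the stochastic forcing separate. I would proceed in three stages --- temporal semi-discretization, spatial weak formulation, and covariance identification of the discretized noise --- the last being where the real work lies.

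\textbf{Temporal semi-discretization.} Writing $\mathcal{L}=(\kappa^2-\nabla\cdot\Hbold\nabla)^\alpha+\gammabold\cdot\nabla$ for the spatial operator, I would integrate \eqref{eq:adv_diff} over one step $[t_k,t_{k+1}]$: the time derivative integrates exactly to $X^{(k+1)}-X^{(k)}$, and the backward Euler rule replaces $\int_{t_k}^{t_{k+1}}\mathcal{L}X\,\dd{t}$ by $dt\,\mathcal{L}X^{(k+1)}$. For the forcing $\frac{\tau}{\sqrt c}\int_{t_k}^{t_{k+1}}Z\,\dd{t}$, the white-in-time structure of $Z=W_T\otimes Z_S$ makes the temporal increment $\int_{t_k}^{t_{k+1}}W_T\,\dd{t}$ centred Gaussian with variance $dt$, independent across disjoint steps, so this term scales as $\sqrt{dt}$ and inherits the spatial law of $Z_S$.

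\textbf{Spatial weak formulation.} I would then expand $X^{(k)}=\sum_i x_i^{(k)}\psi_i$, test the semi-discrete equation against each $\psi_j$, and integrate over $\Omega$. The time-difference and damping terms give the mass matrix $\Mbold$; the diffusion term is treated by Green's identity, $-\int_\Omega(\nabla\cdot\Hbold\nabla X^{(k+1)})\psi_j\,\dd{\sbold}=\int_\Omega\Hbold\nabla X^{(k+1)}\cdot\nabla\psi_j\,\dd{\sbold}$, whose boundary term vanishes under the Neumann condition and produces the stiffness matrix $\Gbold$; and the advection term yields the non-symmetric matrix $\Bbold$ directly, without integration by parts. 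For $\alpha=1$ these combine as $\Kbold=\kappa^2\Mbold+\Gbold$, while for $\alpha=0$ the operator $(\kappa^2-\nabla\cdot\Hbold\nabla)^0$ is the identity, whose Galerkin image is $\Mbold$; in both cases the left-hand side assembles to $\big(\Mbold+\frac{dt}{c}(\Kbold+\Bbold)\big)\xbold^{(k+1)}$ and the explicit term to $\Mbold\xbold^{(k)}$.

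\textbf{Noise identification (the main obstacle).} The delicate part is proving that the projected forcing equals the stated Gaussian vector. I would compute the covariance of the vector whose $j$-th entry is $\frac{\tau}{\sqrt c}\int_\Omega\psi_j\big(\int_{t_k}^{t_{k+1}}Z\,\dd{t}\big)\dd{\sbold}$. In the spatio-temporal white-noise case $Z_S=W_S$, definition \eqref{eq:white_noise} gives $\cov[W_S(\psi_i),W_S(\psi_j)]=\int_\Omega\psi_i\psi_j\,\dd{\sbold}=M_{ij}$, so with the factor $dt$ from time the covariance is $\frac{\tau^2 dt}{c}\Mbold$ and the forcing equals $\frac{\tau\sqrt{dt}}{\sqrt c}\Mbold^{1/2}\zbold^{(k+1)}$ in distribution. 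In the colored case $Z_S$ is regular enough to admit the finite element representation $\Zbold_S\sim\mathcal{N}(\0bold,\Qbold_S^{-1})$, and projecting it against the $\psi_j$ multiplies by $\Mbold$, giving covariance $\frac{\tau^2 dt}{c}\Mbold\Qbold_S^{-1}\Mbold$ and hence $\frac{\tau\sqrt{dt}}{\sqrt c}\Mbold\Lbold_S^\top\zbold^{(k+1)}$ once $\Qbold_S^{-1}=\Lbold_S^\top\Lbold_S$. The subtlety to respect is that the irregular white noise can only be paired with test functions, which yields the square root $\Mbold^{1/2}$, whereas the regular colored noise is first represented on the mesh and then projected, which yields the full factor $\Mbold\Lbold_S^\top$; conflating the two would produce the wrong power of $\Mbold$. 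Collecting the three stages gives \eqref{eq:implicit_scheme} and its colored-noise counterpart.
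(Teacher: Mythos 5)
Your proposal follows essentially the same route as the paper's proof in Appendix~\ref{sec:A1}: implicit Euler in time (with the $\sqrt{dt}$ scaling coming from integrating the temporal white noise over one step), the continuous Galerkin weak form with Green's identity and the Neumann condition producing $\Mbold$, $\Gbold$, $\Bbold$, and the covariance identification $\cov[\mathcal{E}(\psi_i),\mathcal{E}(\psi_j)]=\frac{\tau^2 dt}{c}M_{ij}$ yielding $\Mbold^{1/2}\zbold^{(k+1)}$ in the white case versus $\Mbold\Lbold_S^\top\zbold^{(k+1)}$ in the colored case. The distinction you emphasize between pairing an irregular noise with test functions and projecting a mesh representation of a regular noise is exactly the point the paper's proof relies on, so the argument is correct and matches the paper's.
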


\begin{proof}
The proof is available in Appendix \ref{sec:A1}.
\end{proof}

\begin{remark}
The elements of the matrices $\Mbold$, $\Gbold$ and $\Bbold$ are non-zero only for pairs of basis functions which share common triangles. This implies that the matrix $(\Mbold+\frac{dt}{c}(\Kbold+\Bbold))$ is sparse and that Equation \eqref{eq:implicit_scheme} can be solved by Cholesky decomposition in an efficient  way. 
\end{remark}


\subsection{Stabilization of advection-dominated SPDE}
\label{sec:adv_domination}

When the advection term is too strong with respect to the diffusion term, advection-domination occurs. In the framework outlined above, when $\alpha=1$,
the non-symmetric matrix $\left[\Mbold + \frac{dt}{c}(\Kbold +\Bbold)\right]$ becomes ill-conditioned, which induces oscillations and unstable solutions for the continuous Galerkin approximation. Specifically, the advection-domination occurs when the P\'eclet number $\text{Pe}^h = \frac{\norm{\gammabold} h}{2\lambda} > 1$, where $\lambda$ is the coefficient of the isotropic Laplacian operator (see for example \citet{mekuria2016} or \citet[Chapter 5]{quarteroni2008}).

One possible solution is to decrease the mesh size $h$, i.e., to refine the triangulation, until the advection no longer dominates on the element-level, with $\text{Pe}^h < 1$. However, in many cases this is not a feasible solution because it would increase the number of vertices beyond computation limits. Another solution, adopted here, is to introduce a stabilization term. Many stabilization approaches are possible, some  being more accurate than others \citep[Chapter 5]{quarteroni2008}. 
In our case, we opt for the streamline diffusion stabilization approach \citep{hughes79}, considered as a good trade-off between accuracy and computational complexity. Essentially, the SD approach consists in stabilizing the advection by introducing an artificial diffusion term along the advection direction. The following proposition presents the stabilized solution to \eqref{eq:adv_diff}.

\begin{proposition}
\label{prop:discr_adv_diff_stab}
Assume the same hypotheses as in Proposition \ref{prop:discr_adv_diff} with $\alpha=1$. The solution to Equation \eqref{eq:adv_diff} in presence of 
SD stabilization is
\begin{eqnarray}
\left(\Mbold+\frac{dt}{c}(\Kbold+\Bbold+\Sbold)\right) \mathbf{x}^{(k+1)}=\Mbold\mathbf{x}^{(k)} + \frac{\tilde\tau\sqrt{dt}}{\sqrt{c}}\Mbold^{1/2} \mathbf{z}^{(k+1)},
\label{eq:implicit_scheme_stab}
\end{eqnarray} 
where $\Sbold = [S_{ij}]_{i,j=1}^{N_S}$ is the matrix of the SD stabilization operator $\mathcal{S}$, such that $$S_{ij} = \mathcal{S}(\psi_i,\psi_j)=h\norm{\gammabold}^{-1}\int_\Omega (\gammabold\cdot \nabla \psi_i )(\gammabold\cdot \nabla \psi_j) \dd{\sbold},$$ and $\tilde\tau= \tau \left( \lvert\Hbold + h\norm{\gammabold}^{-1}\gammabold\gammabold^\top \rvert \right)^{-1/4} \left( \lvert\Hbold\rvert \right)^{1/4}$.
When the noise on the right-hand side of Equation \eqref{eq:adv_diff} is colored in space, the discretization becomes
\begin{equation*}
\left(\Mbold+\frac{dt}{c}(\Kbold+\Bbold+\Sbold)\right) \mathbf{x}^{(k+1)}=\Mbold\mathbf{x}^{(k)} + \frac{\tilde\tau\sqrt{dt}}{\sqrt{c}} \Mbold \Lbold_S^\top \mathbf{z}^{(k+1)},
\end{equation*}
where $\Lbold_S$ is as in Proposition \ref{prop:discr_adv_diff}. 
\end{proposition}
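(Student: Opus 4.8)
The plan is to treat the streamline-diffusion term as a modification of the diffusion tensor and then reduce the whole statement to Proposition \ref{prop:discr_adv_diff}. The starting point is the weak formulation of \eqref{eq:adv_diff} with $\alpha=1$ that underlies the proof of Proposition \ref{prop:discr_adv_diff}, to which the SD method adds the extra bilinear form $\mathcal{S}(\cdot,\cdot)$ acting along the streamline direction. First I would write out this stabilized weak form, test it against the basis functions $\{\psi_i\}$, and read off the matrix $\Sbold$: with stabilization parameter $\delta = h\norm{\gammabold}^{-1}$, testing $\psi_i$ against $\psi_j$ in the added term produces exactly $S_{ij} = \delta\int_\Omega(\gammabold\cdot\nabla\psi_i)(\gammabold\cdot\nabla\psi_j)\dd{\sbold}$, which is the stated expression.

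The key observation I would then exploit is purely algebraic: since $\gammabold\cdot\nabla\psi_i = \gammabold^\top\nabla\psi_i$, one has $(\gammabold\cdot\nabla\psi_i)(\gammabold\cdot\nabla\psi_j) = \nabla\psi_i^\top(\gammabold\gammabold^\top)\nabla\psi_j$, so $\Sbold$ is precisely the stiffness matrix attached to the rank-one diffusion tensor $\delta\gammabold\gammabold^\top$. Hence $\Gbold+\Sbold$ is the stiffness matrix of $\tilde{\Hbold} := \Hbold + h\norm{\gammabold}^{-1}\gammabold\gammabold^\top$, and because $\alpha=1$ gives $\Kbold = \kappa^2\Mbold+\Gbold$, the combination $\Kbold+\Sbold = \kappa^2\Mbold+\Gbold+\Sbold$ is the corresponding $\Kbold$-matrix for $\tilde{\Hbold}$. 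Thus the stabilized left-hand operator coincides with the $\alpha=1$ operator of \eqref{eq:adv_diff} in which $\Hbold$ has been replaced by $\tilde{\Hbold}$, the damping, advection and mass terms being left untouched. Proposition \ref{prop:discr_adv_diff} then applies with $\tilde{\Hbold}$ in place of $\Hbold$ (note $\tilde{\Hbold}$ is symmetric positive definite whenever $\Hbold$ is), yielding at once the left-hand side $(\Mbold+\frac{dt}{c}(\Kbold+\Bbold+\Sbold))$ and the mass/noise structure on the right, in both the white-noise and colored-noise cases.

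It remains to justify the replacement of $\tau$ by $\tilde\tau$. The idea is that, having altered the effective diffusion tensor from $\Hbold$ to $\tilde{\Hbold}$, the spatial marginal of the field is distorted, and $\tilde\tau$ is calibrated to restore the marginal variance of the unstabilized model. Here I would invoke Proposition \ref{prop:spatial_trace}: the Mat\'ern normalizing constant in \eqref{eq:spat_trace} shows that the marginal variance is proportional to $\tau^2\lvert\Hbold\rvert^{-1/2}$, with a single power of the determinant coming from the change of variables $\hbold\mapsto\Hbold^{-1/2}\hbold$. Imposing that the stabilized field, built with $\tilde{\Hbold}$ and noise level $\tilde\tau$, carry this same marginal variance forces the determinant ratio $\lvert\tilde{\Hbold}\rvert/\lvert\Hbold\rvert$ to be absorbed into $\tilde\tau$, producing the stated $\tilde\tau = \tau(\lvert\Hbold + h\norm{\gammabold}^{-1}\gammabold\gammabold^\top\rvert)^{-1/4}(\lvert\Hbold\rvert)^{1/4}$.

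I expect the delicate part to be exactly this calibration of $\tilde\tau$, for two reasons. First, one must carefully track the single power of $\lvert\Hbold\rvert^{1/2}$ in the normalizing constant of \eqref{eq:spat_trace} and the precise sign of its exponent, since the entire content of $\tilde\tau$ is this determinant ratio; the matrix determinant lemma gives $\lvert\tilde{\Hbold}\rvert = \lvert\Hbold\rvert(1+h\norm{\gammabold}^{-1}\gammabold^\top\Hbold^{-1}\gammabold)$, which makes the ratio explicit and shows it depends only on the streamline direction relative to $\Hbold$. Second, Proposition \ref{prop:spatial_trace} delivers a \emph{finite} marginal variance only in the colored regime $\alpha_{\text{tot}}>d/2$, so for the pure white-noise statement the variance identity must be read as the formal matching of the same normalizing constant; moreover in the colored case the coloring operator \eqref{eq:col_noise} still carries the original $\Hbold$ through $\Lbold_S$ while the left-hand operator now carries $\tilde{\Hbold}$. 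One therefore has to argue that the scalar correction $\tilde\tau$ alone realigns the marginal variance, i.e. that the stabilization perturbs the spatial law essentially through this determinant factor and not through the shape of the correlation. Establishing that a single scalar rescaling suffices, rather than a full recalibration of the operator, is the main subtlety of the proof.
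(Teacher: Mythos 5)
Your proposal follows essentially the same route as the paper: the discretization is obtained by rerunning the Galerkin/implicit-Euler argument of Proposition \ref{prop:discr_adv_diff} with the extra bilinear form $\mathcal{S}$, the identification of $\Sbold$ as the stiffness matrix of the rank-one tensor $h\norm{\gammabold}^{-1}\gammabold\gammabold^\top$ is the discrete counterpart of the paper's Green's-identity rewriting of the stabilized SPDE as \eqref{eq:diff_adv_sl_spde} with effective diffusion $\tilde{\Hbold}=\Hbold+h\norm{\gammabold}^{-1}\gammabold\gammabold^\top$, and the calibration of $\tilde\tau$ by matching the marginal variance of Proposition \ref{prop:spatial_trace} is exactly the paper's argument. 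Your remarks on the white-noise case (where the variance identity is only formal) and on $\Lbold_S$ retaining the original $\Hbold$ are subtleties the paper also glosses over.

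One concrete caveat: carry your own variance-matching step to the end. Since $\sigma^2\propto\tau^2\lvert\Hbold\rvert^{-1/2}$, equating $\tilde\tau^2\lvert\tilde{\Hbold}\rvert^{-1/2}=\tau^2\lvert\Hbold\rvert^{-1/2}$ gives $\tilde\tau=\tau\,\lvert\tilde{\Hbold}\rvert^{1/4}\lvert\Hbold\rvert^{-1/4}$, i.e.\ the \emph{reciprocal} of the formula you (and the proposition as displayed) state; this is also the physically sensible direction, since the added streamline diffusion smooths the field and lowers its variance, so the noise amplitude must be increased. The paper's own discussion immediately after the proposition arrives at this same $+1/4$ exponent on $\lvert\tilde{\Hbold}\rvert$, in contradiction with the displayed statement, so you have faithfully reproduced a sign inconsistency rather than resolved it; your write-up should either derive the exponent explicitly or flag the discrepancy.
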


The proof of the discretized equation follows the same reasoning as that of Proposition \ref{prop:discr_adv_diff} with the addition of the  matrix $\Sbold$. The streamline diffusion approach can be seen as a perturbation of the original SPDE \citep{bank1990}. Indeed, by making the classical hypothesis of Neumann boundary condition on $\Omega$ and by using the Green's first identity, we get
$$\int_\Omega(\gammabold\cdot \nabla x )(\gammabold\cdot \nabla v) \dd{\sbold} = -\int_\Omega\nabla \cdot (\gammabold\gammabold^\top) \nabla x v\dd{\sbold}.$$
As a consequence, the original SPDE \eqref{eq:adv_diff} can be rewritten with an additional diffusion term as 
\begin{equation}
    \left[\frac{\partial}{\partial t} + \frac{1}{c} \left[\kappa^2 - \nabla \cdot \left(\Hbold + h\norm{\gammabold}^{-1}\gammabold\gammabold^\top\right) \nabla + \gammabold \cdot \nabla \right]\right] X(t,\sbold) = \frac{\tau}{\sqrt{c}}Z(t,\sbold).
    \label{eq:diff_adv_sl_spde}
\end{equation}
The term $(h\norm{\gammabold}^{-1}\gammabold\gammabold^\top)$ acts as an anisotropic “diffusion” matrix that is added to the anisotropy (or identity) matrix $\Hbold$ of the original diffusion. This extra diffusion stabilizes the advection directed along the direction $\gammabold$. By following the proof of Proposition \ref{prop:spatial_trace}, we find that the marginal variance of the spatial field $X(t,\cdot)$ of Equation \eqref{eq:diff_adv_sl_spde} is equal to 
$$\sigma^2 = \frac{\tau^{2}\Gamma(\alpha_{\text{tot}}-d/2)}{\Gamma(\alpha_{\text{tot}})2(4\pi)^{d/2} \kappa^{2(\alpha_{\text{tot}}-d/2)} \lvert\Hbold+h\norm{\gammabold}^{-1}\gammabold\gammabold^\top\rvert^{1/2}}.$$
For the variance to be equal to the variance in Proposition \ref{prop:spatial_trace},  $\tau$ must be replaced by $\tilde{\tau}=\tau \left( \lvert\Hbold + h\norm{\gammabold}^{-1}\gammabold\gammabold^\top \rvert \right)^{1/4} \left( \lvert\Hbold\rvert \right)^{-1/4}$.


\subsection{Spatio-temporal Gaussian Markov Random Field approximation}
\label{sec:prec_mat}

\begin{proposition}
\label{prop:grmf}
In presence of an advection-dominated flow and a spatio-temporal white noise on the right-hand side of Equation \eqref{eq:adv_diff}, the discretized vector $\xbold^{(k+1)}$ on the mesh $\cal T$ at each time step is the solution of the following equation:
\begin{align}
\mathbf{x}^{(0)} &\sim \mathcal{N}(\0bold,\Sigmabold), \nonumber\\
\mathbf{x}^{(k+1)}&=\Dbold\mathbf{x}^{(k)} + \Ebold \mathbf{z}^{(k+1)},
\label{eq:timestep}
\end{align}
where 
\begin{eqnarray}
\Dbold & = & \left(\Mbold + \frac{dt}{c}(\Kbold + \Bbold + \Sbold)\right)^{-1}\Mbold, \nonumber \\
\Ebold & = & \frac{\tilde\tau \sqrt{dt}}{ \sqrt{c}}\left(\Mbold + \frac{dt}{c}(\Kbold + \Bbold + \Sbold)\right)^{-1}\Mbold^{1/2},
\label{eq:matrices_timestep}
\end{eqnarray}
and $\mathbf{z}^{(k+1)}\sim\mathcal{N}(\0bold,\Ibold_{N_S})$ is independent of $\mathbf{x}^{(0)},\dots,\mathbf{x}^{(k+1)}$. 
In presence of a spatio-temporal noise colored in space on the right-hand side of Equation \eqref{eq:adv_diff}, the matrix $\Ebold$ reads 
$$\Ebold = \frac{\tilde\tau \sqrt{dt}}{ \sqrt{c}}\left(\Mbold + \frac{dt}{c}(\Kbold + \Bbold + \Sbold)\right)^{-1}\Mbold\Lbold_S^\top,$$
where $\Lbold_S$ is defined in Proposition \ref{prop:discr_adv_diff}.
\end{proposition}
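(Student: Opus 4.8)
The plan is to derive the stated recurrence directly from the stabilized implicit Euler scheme of Proposition~\ref{prop:discr_adv_diff_stab}, so that the only genuine content is the invertibility of the left-hand operator and the identification of the stationary initial law. Writing $\Abold := \Mbold + \frac{dt}{c}(\Kbold + \Bbold + \Sbold)$, Equation~\eqref{eq:implicit_scheme_stab} reads $\Abold\,\xbold^{(k+1)} = \Mbold\,\xbold^{(k)} + \frac{\tilde\tau\sqrt{dt}}{\sqrt{c}}\Mbold^{1/2}\zbold^{(k+1)}$. Once $\Abold$ is shown invertible, left-multiplying by $\Abold^{-1}$ yields exactly $\xbold^{(k+1)} = \Dbold\,\xbold^{(k)} + \Ebold\,\zbold^{(k+1)}$ with $\Dbold = \Abold^{-1}\Mbold$ and $\Ebold = \frac{\tilde\tau\sqrt{dt}}{\sqrt{c}}\Abold^{-1}\Mbold^{1/2}$, matching \eqref{eq:matrices_timestep}; the colored-noise case is identical with $\Mbold^{1/2}$ replaced by $\Mbold\Lbold_S^\top$, read off directly from the second display of Proposition~\ref{prop:discr_adv_diff_stab}. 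The independence of $\zbold^{(k+1)}$ from $\xbold^{(0)},\dots,\xbold^{(k)}$ is inherited from the construction of the scheme, since each step injects a fresh standard Gaussian vector.

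The first substantive step is the invertibility of $\Abold$, which I would establish through its associated quadratic form. The mass matrix $\Mbold$ is the Gram matrix of the linearly independent hat functions $\{\psi_i\}$, hence symmetric positive definite; with $\alpha=1$ one has $\Kbold = \kappa^2\Mbold + \Gbold$, symmetric positive definite since $\Gbold$ is a positive semi-definite stiffness matrix and $\kappa^2>0$; the streamline-diffusion matrix $\Sbold$ is a Gram matrix of the functionals $\gammabold\cdot\nabla\psi_i$ and is therefore positive semi-definite. The only non-symmetric contribution is the advection matrix $\Bbold$: using $B_{ij}+B_{ji} = \int_\Omega \gammabold\cdot\nabla(\psi_i\psi_j)\dd{\sbold}$, which for a constant velocity reduces by the divergence theorem to the boundary integral $\int_{\partial\Omega}(\gammabold\cdot\mathbf{n})\psi_i\psi_j$, its symmetric part vanishes under a no-flux condition $\gammabold\cdot\mathbf{n}=0$, so $\xbold^\top\Bbold\xbold=0$ for every $\xbold$. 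Consequently $\xbold^\top\Abold\xbold = \xbold^\top\Mbold\xbold + \frac{dt}{c}(\xbold^\top\Kbold\xbold + \xbold^\top\Sbold\xbold) \geq \xbold^\top\Mbold\xbold > 0$ for $\xbold\neq\0bold$, so $\Abold$ has positive definite symmetric part and is invertible (and, failing the boundary assumption, invertibility still persists as a perturbation of $\Mbold$ for sufficiently small $dt/c$).

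It remains to pin down the initial covariance $\Sigmabold$ ensuring stationarity of the VAR(1) recursion. Since each $\zbold^{(k+1)}$ is independent of the past, stationarity forces $\Sigmabold$ to be a fixed point of the covariance update, i.e.\ the solution of the discrete Lyapunov (Stein) equation $\Sigmabold = \Dbold\Sigmabold\Dbold^\top + \Ebold\Ebold^\top$; drawing $\xbold^{(0)}\sim\mathcal{N}(\0bold,\Sigmabold)$ then propagates the same law to every $\xbold^{(k)}$. Existence and uniqueness of a symmetric positive definite $\Sigmabold$ hold precisely when the spectral radius of $\Dbold$ is strictly less than one, which is the stability of the implicit Euler scheme asserted in Section~\ref{sec:discretization}. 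I expect this bound $\rho(\Dbold)<1$ to be the main obstacle: unlike the purely algebraic rearrangement, it requires controlling the eigenvalues of $\Abold^{-1}\Mbold$ in the presence of the non-symmetric $\Bbold$, which I would handle by relating them through a generalized eigenproblem to the spectrum of $\Mbold^{-1}(\Kbold+\Bbold+\Sbold)$ and exploiting that the latter lies in the right half-plane, so that $(\Ibold + \frac{dt}{c}\Mbold^{-1}(\Kbold+\Bbold+\Sbold))^{-1}$ is a strict contraction in the $\Mbold$-inner product.
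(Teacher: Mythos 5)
Your core step---left-multiplying the stabilized implicit scheme \eqref{eq:implicit_scheme_stab} by $\left(\Mbold+\frac{dt}{c}(\Kbold+\Bbold+\Sbold)\right)^{-1}$ and reading off $\Dbold$ and $\Ebold$---is exactly the paper's proof, which is stated in a single line, and your invertibility argument via the positive definite symmetric part is a reasonable supplement that the paper leaves implicit. However, what you flag as the main obstacle is not part of the statement: the paper does not require $\Sigmabold$ to be the stationary fixed point of the Lyapunov equation $\Sigmabold=\Dbold\Sigmabold\Dbold^\top+\Ebold\Ebold^\top$, but explicitly allows $\xbold^{(0)}$ to carry \emph{any} admissible positive definite covariance (see Section \ref{sec:prec_mat}), the stationary choice merely accelerating convergence to the stationary regime. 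Consequently the spectral-radius bound $\rho(\Dbold)<1$, while true and relevant to the stability discussion of Section \ref{sec:discretization}, is not needed to prove the proposition as stated.
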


\begin{proof}
Starting from Equation \eqref{eq:implicit_scheme_stab}, which represents the numerical scheme for the advection-diffusion spatio-temporal SPDE with stabilization, it is straightforward to obtain \eqref{eq:timestep}.
\end{proof}

When the SPDE is not advection-dominated, which implies that no stabilization term is needed, Equation \eqref{eq:matrices_timestep} is replaced by the similar equation where the matrix $\Sbold$ is deleted and $\tilde\tau$ is replaced by $\tau$.

The covariance matrix $\Sigmabold$ of the approximate spatial trace $\xbold^{(0)}$ at the first time step, can be taken to be equal to any admissible positive definite matrix. The closer $\Sigmabold$ is to the covariance $C_S$ of $X(t,\cdot)$, the faster the stationary solution is obtained. When the hypotheses of Proposition \ref{prop:spatial_trace} are satisfied, an efficient option is to choose $\Sigmabold$ as the Mat\'ern covariance of Equation \eqref{eq:spat_trace}. 

To obtain fast inference and prediction computations, the precision matrix of the spatio-temporal discretized solution $\xbold_{0:N_T}=[\mathbf{x}^{(0)},\dots,\mathbf{x}^{(N_T)}]^\top$ must be sparse. For this reason, $\Mbold$ is replaced by the diagonal matrix $\widetilde{\Mbold}=[\widetilde{M}_{ij}]_{i,j=1}^{N_S}$, where $\widetilde{M}_{ii} = \langle \psi_i,1 \rangle$ and $\widetilde{M}_{ij} = 0$ if $i\neq j$ \citep{lindgren2011}. This technique is called mass lumping and is common practice in FEM \citep[Chapter 5]{quarteroni2008}. From now on, we always use the diagonal matrix $\widetilde{\Mbold}$, but for ease of reading, it will still be denoted  $\Mbold$. 

\begin{proposition}\label{prop:global_prec}
Let $\xbold_{0:N_T}=[\mathbf{x}^{(0)},\dots,\mathbf{x}^{(N_T)}]^\top$ be the vector containing all spatial solutions until time step $N_T$ of Equation \eqref{eq:timestep}. The global precision matrix $\Qbold$ of the vector $\xbold_{0:N_T}$ of size $(N_S(N_T+1),N_S(N_T+1))$ reads
\begin{equation}
\scalemath{0.8}{
\Qbold = \begin{pmatrix}
\Sigmabold^{-1}+\Dbold^\top\Fbold^{-1}\Dbold \phantom{X} & -\Dbold^\top\Fbold^{-1} & 0 & \dots & 0 \\
-\Fbold^{-1}\Dbold & \Fbold^{-1}+\Dbold^\top\Fbold^{-1}\Dbold \phantom{X} & -\Dbold^\top\Fbold^{-1}  & \ddots & \vdots \\
\vdots & \ddots & \ddots & \ddots & 0 \\
\vdots & \ddots & -\Fbold^{-1}\Dbold & \phantom{X} \Fbold^{-1}+\Dbold^\top\Fbold^{-1}\Dbold \phantom{X} & -\Dbold^\top\Fbold^{-1} \\
0 & \dots  & 0 & -\Fbold^{-1}\Dbold & \Fbold^{-1}
\end{pmatrix},}
\label{eq:prec_matrix}
\end{equation}
where $\Fbold=\Ebold \Ebold^\top$.
\end{proposition}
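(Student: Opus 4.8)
The plan is to exploit the first-order Markov structure of the recursion \eqref{eq:timestep}. Since each $\mathbf{z}^{(k+1)}$ is Gaussian and independent of $\mathbf{x}^{(0)},\dots,\mathbf{x}^{(k)}$, the sequence $\{\mathbf{x}^{(k)}\}_{k=0}^{N_T}$ is a vector Gaussian Markov chain whose transition kernel is $\mathbf{x}^{(k+1)}\mid\mathbf{x}^{(k)}\sim\mathcal{N}(\Dbold\mathbf{x}^{(k)},\Fbold)$ with $\Fbold=\Ebold\Ebold^\top$. First I would record that $\Fbold$ is symmetric positive definite, hence invertible: in \eqref{eq:matrices_timestep} the matrix $\Ebold$ is the product of the invertible system-matrix inverse $\bigl(\Mbold+\tfrac{dt}{c}(\Kbold+\Bbold+\Sbold)\bigr)^{-1}$ with the invertible factor $\Mbold^{1/2}$ (or $\Mbold\Lbold_S^\top$ in the colored-noise case, $\Lbold_S$ being a Cholesky factor), so $\Ebold$ is invertible and $\Fbold^{-1}$ is well defined.

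Next I would write the joint density as the telescoping product dictated by the Markov property,
$$p(\xbold_{0:N_T}) = p(\mathbf{x}^{(0)})\prod_{k=0}^{N_T-1} p\bigl(\mathbf{x}^{(k+1)}\mid\mathbf{x}^{(k)}\bigr),$$
and take $-2\log$. Up to an additive constant this produces the quadratic form
$$(\mathbf{x}^{(0)})^\top\Sigmabold^{-1}\mathbf{x}^{(0)} + \sum_{k=0}^{N_T-1}\bigl(\mathbf{x}^{(k+1)}-\Dbold\mathbf{x}^{(k)}\bigr)^\top\Fbold^{-1}\bigl(\mathbf{x}^{(k+1)}-\Dbold\mathbf{x}^{(k)}\bigr),$$
which by definition of the precision equals $\xbold_{0:N_T}^\top\,\Qbold\,\xbold_{0:N_T}$. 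The remaining task is to read off $\Qbold$ by expanding each summand into its four pieces, $\mathbf{x}^{(k+1)\top}\Fbold^{-1}\mathbf{x}^{(k+1)}$, $-\mathbf{x}^{(k+1)\top}\Fbold^{-1}\Dbold\mathbf{x}^{(k)}$, $-\mathbf{x}^{(k)\top}\Dbold^\top\Fbold^{-1}\mathbf{x}^{(k+1)}$ and $\mathbf{x}^{(k)\top}\Dbold^\top\Fbold^{-1}\Dbold\mathbf{x}^{(k)}$, and collecting the coefficient attached to each block $(\mathbf{x}^{(i)},\mathbf{x}^{(j)})$.

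Collecting block by block yields the stated matrix. Each adjacent pair contributes $-\Dbold^\top\Fbold^{-1}$ to block $(k,k+1)$ and its transpose $-\Fbold^{-1}\Dbold$ to block $(k+1,k)$, while non-adjacent blocks receive nothing, which is precisely the Markov property producing the block-tridiagonal pattern. For the diagonal I would track that every interior state $\mathbf{x}^{(k)}$, $1\le k\le N_T-1$, appears both as the target of the transition from $k-1$ (contributing $\Fbold^{-1}$) and as the source of the transition to $k+1$ (contributing $\Dbold^\top\Fbold^{-1}\Dbold$), giving $\Fbold^{-1}+\Dbold^\top\Fbold^{-1}\Dbold$. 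The only subtlety is the two boundary blocks: the block $(0,0)$ collects $\Sigmabold^{-1}$ from the prior plus $\Dbold^\top\Fbold^{-1}\Dbold$ from the single transition in which $\mathbf{x}^{(0)}$ is a source, whereas $\mathbf{x}^{(N_T)}$ is a target but never a source, so block $(N_T,N_T)$ receives only $\Fbold^{-1}$. This reproduces \eqref{eq:prec_matrix}, and since only $\Fbold=\Ebold\Ebold^\top$ changes, the argument is identical in the colored-noise case. The main place to be careful is exactly this boundary accounting; the interior and off-diagonal contributions are entirely mechanical.
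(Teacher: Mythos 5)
Your argument is correct, but it follows a genuinely different route from the paper's. The paper (Appendix \ref{sec:A3}) writes $\xbold_{0:N_T}$ as a linear map $\Rbold$ applied to the independent Gaussian vector $[\mathbf{x}^{(0)},\zbold_{1:N_T}]^\top$, where $\Rbold$ is lower block-triangular with powers of $\Dbold$ below the diagonal; it then observes that $\Rbold^{-1}$ is block-bidiagonal with blocks $\Ibold_{N_S}$, $\Ebold^{-1}$ and $-\Ebold^{-1}\Dbold$, and obtains $\Qbold={\Rbold^{-1}}^\top\,\mathrm{diag}(\Sigmabold^{-1},\Ibold_{N_S},\dots,\Ibold_{N_S})\,\Rbold^{-1}$ by the change-of-variables rule for precision matrices, reading off the tridiagonal blocks from that product. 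You instead factorize the joint density via the Markov property, take $-2\log$, and collect the blocks of the resulting quadratic form $(\mathbf{x}^{(0)})^\top\Sigmabold^{-1}\mathbf{x}^{(0)}+\sum_k(\mathbf{x}^{(k+1)}-\Dbold\mathbf{x}^{(k)})^\top\Fbold^{-1}(\mathbf{x}^{(k+1)}-\Dbold\mathbf{x}^{(k)})$. The two derivations encode the same fact --- the rows of $\Rbold^{-1}$ are exactly the innovations $\Ebold^{-1}(\mathbf{x}^{(k+1)}-\Dbold\mathbf{x}^{(k)})=\mathbf{z}^{(k+1)}$ --- but your version avoids writing down and inverting $\Rbold$, requires only the invertibility of $\Fbold$ (which you correctly justify from the invertibility of the system matrix and of $\Mbold^{1/2}$ or $\Mbold\Lbold_S^\top$), and makes the boundary accounting for blocks $(0,0)$ and $(N_T,N_T)$ explicit. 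The paper's matrix formulation has the mild advantage of feeding directly into the subsequent explicit expression of $\Qbold$ in terms of $\Jbold$, $\Mbold$ and $\Qbold_S$, and of the determinant computation in Equation \eqref{eq:det_Q}. Both are complete proofs of the stated result.
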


\begin{proof}
The proof is available in Appendix \ref{sec:A3}.
\end{proof}

\begin{remark}
    Matrix \eqref{eq:prec_matrix} has a tridiagonal structure in time and is sparse in each of its spatial blocks of size $(N_S,N_S)$ located on the three diagonals. The sparsity of the precision matrix relies on the choice of $\widetilde{\Mbold}$ and on the value of $\alpha$ (the higher $\alpha$ the less sparse the spatial blocks). Conversely, the precision of finite element method discretization is influenced by the mesh density (the smaller $h$, the more precise the solution); this factor plays a role in defining the size of the precision matrix, but not its sparsity pattern.
    The sparsity pattern will be useful for the following sections concerning estimation and prediction. 
\end{remark}

\section{Estimation, prediction and simulation}
\label{sec:estimation}

This section presents an efficient implementation for parameter estimation and spatio-temporal prediction within the spatio-temporal SPDE framework described in Section \ref{sec:spatiotemp_spde}. We consider the advection-diffusion SPDE \eqref{eq:adv_diff} with $d=2$, $\alpha=1$, $\Hbold = \Ibold$ (isotropic diffusion) and colored noise in space with $\alpha_S=2$. Similar computations can be generalized to other values of $\alpha_S$ such that $\alpha_S/2$ is integer or to anisotropic diffusion.

The spatio-temporal domain $\Omega \times [0,T]$ is discretized in space with a triangulation $\cal T$ with $N_S$ nodes and discretized in time by means of $(N_T+1)$ regular time steps. This space-time discretization  is denoted ${\cal T}' = {\cal T} \times \{0,T/N_T,\dots,T\}$. At each time step $k\in\{0,\dots,N_T\}$ there are $n^{(k)}$ observations scattered in the spatial domain $\Omega$. There is thus a total of $n=\sum_{k=0}^{N_T} n^{(k)}$ spatio-temporal data collected in the vector $\ybold_{0:N_T}=[(\mathbf{y}^{(0)})^\top,\dots,(\mathbf{y}^{(N_T)})^\top]^\top$. 

We consider a statistical model with fixed and random effects. The fixed effect is a linear trend on a set of covariates and the random effect is modeled as the FEM/FD discretization of the random field solution to the SPDE \eqref{eq:adv_diff} (as described in Section \ref{sec:discretization}), with the addition of random noise:
\begin{equation}
  \ybold_{0:N_T}=\mathbf{\Lambdabold}\bbold + \mathbf{A}\xbold_{0:N_T} + \sigma_0 \varepsilonbold,
  \label{eq:stat_model}
\end{equation}
where $\bbold$ is the vector of $q$ fixed effects and $\Lambdabold$ is a $(n,q)$ matrix of covariates with $[\Lambdabold]_{jk} = \lambda_k(t_j,\sbold_j)$, $j=1\dots,n$ and $k=1,\dots,q$. The matrix $\Abold$ is the $(n,N_S(N_T+1))$ projection matrix between the data and the points in ${\cal T}'$, and $\varepsilonbold$ is a standard Gaussian random vector with independent components. When the observation locations do not change during the time window, $\Abold^\top\Abold$ is a $(N_S(N_T+1),N_S(N_T+1))$ block-diagonal matrix with all $(N_S,N_S)$ equal blocks.

The discretization with FEM in space and FD in time is justified by a few considerations: compared to a Fourier approximation approach (such as the ones considered in \citet{sigrist2015,liu2020,sharrock2022}), it is better suited for scattered data, it can be easily adapted to spatially and/or temporally varying parameters in the SPDE, leading to nonstationary extensions of the method, and it can be generalized to spatio-temporal fields on Riemannian surfaces with only a few changes in the approach \citep{pereira2023hal}.

\subsection{Estimation of the parameters}
\label{sec:estimation_scattered}

The parameters of the SPDE are estimated using maximum likelihood. We collect the parameters of the SPDE in the vector $\thetabold^\top= [\kappa, \gamma_x, \gamma_y, c, \tau]$, while all the parameters of the statistical model are collected in $\psibold^\top = [\thetabold^\top,\bbold^\top,\sigma_0]$.
Following \eqref{eq:stat_model}, $\ybold_{0:N_T}$ is a Gaussian vector with expectation $\Lambdabold\bbold$ and covariance matrix 
$$\Sigmabold_{\ybold_{0:N_T}}=\Abold \Qbold^{-1}(\thetabold)\Abold^\top+\sigma_0^2 \Ibold_{n},$$
where $\Qbold(\thetabold)$ is a precision matrix of size $(N_S(N_T+1), N_S(N_T+1))$ depending on the parameters $\thetabold$. For ease of notation, we use $\Qbold$ instead of $\Qbold(\thetabold)$. The log-likelihood is equal to
\begin{equation}
\mathcal{L}(\psibold)=-\frac{n}{2}\log(2\pi)-\frac{1}{2}\log \lvert\Sigmabold_{\ybold_{0:N_T}}\rvert -\frac{1}{2}(\ybold_{0:N_T}-\Lambdabold\bbold)^\top\Sigmabold_{\ybold_{0:N_T}}^{-1}(\ybold_{0:N_T}-\Lambdabold\bbold).
\label{eq:loglike_latent}
\end{equation}

We use the Broyden, Fletcher, Goldfarb, and Shanno optimization algorithm \citep{nocedal06}, that makes use of the second-order derivative of the objective function. The gradients of the log-likelihood function \eqref{eq:loglike_latent} with respect to the different parameters included in $\psibold$ are approximately computed with FD. 

We now propose and detail a  computationally efficient formulation for both the terms of the log-likelihood \eqref{eq:loglike_latent}, namely $-\frac{1}{2}\log \lvert\Sigmabold_{\ybold_{0:N_T}}\rvert$ and $-\frac{1}{2}(\ybold_{0:N_T}-\Lambdabold\bbold)^\top\Sigmabold_{\ybold_{0:N_T}}^{-1}(\ybold_{0:N_T}-\Lambdabold\bbold)$. We first consider the log-determinant. The quadratic form is addressed next.

\begin{proposition}
In the framework outlined above, we have 
\begin{equation}
    \log\lvert\Sigmabold_{\ybold_{0:N_T}}\rvert=n\log\sigma_0^2 -\log\lvert\Qbold\rvert+\log\lvert\Qbold_{\Abold} \rvert,
    \label{eq:logdet_sigmaz}
\end{equation}
where $\Qbold_{\Abold} = \Qbold+\sigma_0^{-2}\Abold^\top\Abold$.
\end{proposition}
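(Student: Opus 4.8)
The plan is to recognize Equation \eqref{eq:logdet_sigmaz} as an instance of the matrix determinant lemma and to prove it by evaluating the determinant of a single auxiliary block matrix in two different ways, via the two Schur complements. First I would introduce the square matrix of size $(n+N_S(N_T+1))$
\[
\Pbold = \begin{pmatrix} \sigma_0^2 \Ibold_n & \Abold \\ \Abold^\top & -\Qbold \end{pmatrix},
\]
which is well defined because $\Abold$ has size $(n, N_S(N_T+1))$ and $\Qbold$ is the square precision matrix of Proposition \ref{prop:global_prec}. Both $\sigma_0^2\Ibold_n$ and $\Qbold$ are invertible ($\sigma_0>0$ and $\Qbold$ is positive definite), so both Schur complements are available.

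Taking the Schur complement with respect to the lower-right block $-\Qbold$ gives
\[
\lvert \Pbold \rvert = \lvert -\Qbold\rvert \,\bigl\lvert \sigma_0^2 \Ibold_n + \Abold \Qbold^{-1}\Abold^\top\bigr\rvert = \lvert -\Qbold\rvert \,\lvert \Sigmabold_{\ybold_{0:N_T}}\rvert,
\]
using the definition $\Sigmabold_{\ybold_{0:N_T}} = \Abold\Qbold^{-1}\Abold^\top + \sigma_0^2\Ibold_n$. Taking instead the Schur complement with respect to the upper-left block $\sigma_0^2\Ibold_n$ gives
\[
\lvert\Pbold\rvert = \lvert \sigma_0^2\Ibold_n\rvert \,\bigl\lvert -\Qbold - \sigma_0^{-2}\Abold^\top\Abold\bigr\rvert = (\sigma_0^2)^n\,\lvert -\Qbold_{\Abold}\rvert,
\]
where I recognize $\Qbold + \sigma_0^{-2}\Abold^\top\Abold = \Qbold_{\Abold}$.

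Equating the two expressions and cancelling the sign factor $(-1)^{N_S(N_T+1)}$, which appears identically in $\lvert-\Qbold\rvert = (-1)^{N_S(N_T+1)}\lvert\Qbold\rvert$ and in $\lvert-\Qbold_{\Abold}\rvert = (-1)^{N_S(N_T+1)}\lvert\Qbold_{\Abold}\rvert$, yields
\[
\lvert\Qbold\rvert\,\lvert\Sigmabold_{\ybold_{0:N_T}}\rvert = (\sigma_0^2)^n\,\lvert\Qbold_{\Abold}\rvert.
\]
Taking logarithms and rearranging produces \eqref{eq:logdet_sigmaz} directly.

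The argument is essentially routine, so there is no serious obstacle; the only points requiring care are the bookkeeping of the two $(-1)^{N_S(N_T+1)}$ sign factors (which cancel, since both blocks $\Qbold$ and $\Qbold_{\Abold}$ share the dimension $N_S(N_T+1)$) and verifying that both $\Qbold$ and $\Qbold_{\Abold}$ are invertible so the logarithms are well defined. The latter holds because $\Qbold$ is positive definite and $\sigma_0^{-2}\Abold^\top\Abold$ is positive semidefinite, making $\Qbold_{\Abold}$ positive definite. The point worth emphasizing is the computational payoff: the right-hand side of \eqref{eq:logdet_sigmaz} replaces the log-determinant of the dense $(n,n)$ matrix $\Sigmabold_{\ybold_{0:N_T}}$ by those of the sparse matrices $\Qbold$ and $\Qbold_{\Abold}$, both of which can be obtained cheaply from sparse Cholesky factorizations.
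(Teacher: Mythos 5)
Your proof is correct. The identity
$\lvert\Qbold\rvert\,\lvert\Sigmabold_{\ybold_{0:N_T}}\rvert=(\sigma_0^2)^n\lvert\Qbold_{\Abold}\rvert$
follows exactly as you say from evaluating $\lvert\Pbold\rvert$ via the two Schur complements, the sign factors $(-1)^{N_S(N_T+1)}$ do cancel, and the invertibility claims ($\Qbold$ positive definite by its construction in Proposition \ref{prop:global_prec}, $\Qbold_{\Abold}$ positive definite as the sum of a positive definite and a positive semidefinite matrix) are sound.

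The paper reaches the same identity by a closely related but not identical route: it introduces the joint covariance matrix $\Sigmabold_c$ of the latent field and the observations together with its precision matrix $\Qbold_c$, computes $\log\lvert\Sigmabold_c\rvert$ once as $-\log\lvert\Qbold_c\rvert$ via a block determinant formula, and once via the block formula on $\Sigmabold_c$ itself combined with the Woodbury identity to recognize $\lvert\Qbold^{-1}-\Qbold^{-1}\Abold^\top\Sigmabold_{\ybold_{0:N_T}}^{-1}\Abold\Qbold^{-1}\rvert=\lvert\Qbold_{\Abold}\rvert^{-1}$. Your version replaces the covariance/precision pair by a single indefinite auxiliary matrix and avoids the Woodbury identity altogether, which makes the argument purely algebraic and somewhat shorter; the paper's version has the advantage of exhibiting $\Qbold_c$ explicitly, and that matrix is reused later (the kriging predictor \eqref{eq:kriging} is read directly off its blocks). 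Both arguments deliver the same computational payoff you note at the end.
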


\begin{proof}
To compute $\log\lvert\Sigmabold_{\ybold_{0:N_T}}\rvert$, let us consider the augmented matrix 
\begin{equation}
\Sigmabold_c=\begin{pmatrix} \Qbold^{-1} & \Qbold^{-1}\Abold^\top\\ \Abold \Qbold^{-1} & \Sigmabold_{\ybold_{0:N_T}}\end{pmatrix}.
\label{eq:augmented_mat}
\end{equation}
Hence, 
\begin{equation}
 \Qbold_c=\Sigmabold_c^{-1}=\begin{pmatrix} \Qbold + \sigma_0^{-2}\Abold^\top \Abold & - \sigma_0^{-2}\Abold^\top\\  - \sigma_0^{-2} \Abold  & \sigma_0^{-2} \Ibold_n \end{pmatrix}.
\label{eq:augmented_prec_mat}
\end{equation}

\noindent Using block formulas, we have 
$$\log\lvert\Sigmabold_c\rvert=-\log \lvert\Qbold_c\rvert=
-\log\lvert\Qbold\rvert+n\log\sigma_0^2,$$
and
\begin{eqnarray*}
\log\lvert\Sigmabold_c\rvert & = &\log\lvert\Sigmabold_{\ybold_{0:N_T}}\rvert+\log\lvert\Qbold^{-1}-\Qbold^{-1}\Abold^\top\Sigmabold_{\ybold_{0:N_T}}^{-1}\Abold \Qbold^{-1}\rvert\\
& = &\log\lvert\Sigmabold_{\ybold_{0:N_T}}\rvert-\log\lvert\Qbold+\sigma_0^{-2}\Abold^\top\Abold\rvert,
\end{eqnarray*}
where the last equality is a consequence of the Woodbury identity.
This leads to the result.
\end{proof}

\begin{proposition}
The term $\log\lvert\Qbold\rvert$ in Equation \eqref{eq:logdet_sigmaz} can be computed with the computationally efficient formula
\begin{equation}
    \log\lvert \Qbold\rvert = \log\lvert \Sigmabold^{-1}\rvert+(N_T-1)\log\lvert \Fbold^{-1}\rvert,
    \label{eq:logdet_Q}
\end{equation}
with
$$\Fbold^{-1}=\frac{c}{\tilde\tau^2 dt}(\Mbold+\frac{dt}{c}(\Kbold+\Bbold+\Sbold))^\top \Mbold^{-1}\Qbold_S\Mbold^{-1} (\Mbold+\frac{dt}{c}(\Kbold+\Bbold+\Sbold)),$$ 
where $\Qbold_S$ is the precision matrix of the discretized spatial noise $\Zbold_S$ defined in Proposition \ref{prop:discr_adv_diff}. 
\end{proposition}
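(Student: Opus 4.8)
The plan is to exploit that the block-tridiagonal matrix $\Qbold$ of Proposition \ref{prop:global_prec} is the precision of the Gaussian Markov chain \eqref{eq:timestep}, so its determinant factorizes along the temporal transitions and collapses to a product of ``innovation'' precisions. Concretely, I would introduce the block lower-bidiagonal matrix $\Rbold$ of size $(N_S(N_T+1),N_S(N_T+1))$ with identity blocks on the diagonal and $-\Dbold$ on the first sub-diagonal, i.e. the linear map sending $\xbold_{0:N_T}$ to the innovations $(\mathbf{x}^{(0)},\mathbf{x}^{(1)}-\Dbold\mathbf{x}^{(0)},\dots,\mathbf{x}^{(N_T)}-\Dbold\mathbf{x}^{(N_T-1)})$, together with the block-diagonal matrix $\boldsymbol{\Psi}=\mathrm{blockdiag}(\Sigmabold^{-1},\Fbold^{-1},\dots,\Fbold^{-1})$.

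The first key step is to verify the identity $\Qbold=\Rbold^\top\boldsymbol{\Psi}\Rbold$. I would do this by expanding $\xbold_{0:N_T}^\top\Qbold\,\xbold_{0:N_T}$ as $(\mathbf{x}^{(0)})^\top\Sigmabold^{-1}\mathbf{x}^{(0)}+\sum_{k=0}^{N_T-1}(\mathbf{x}^{(k+1)}-\Dbold\mathbf{x}^{(k)})^\top\Fbold^{-1}(\mathbf{x}^{(k+1)}-\Dbold\mathbf{x}^{(k)})$, which is exactly $(\Rbold\xbold_{0:N_T})^\top\boldsymbol{\Psi}(\Rbold\xbold_{0:N_T})$, and checking block by block that the resulting diagonal blocks $\Sigmabold^{-1}+\Dbold^\top\Fbold^{-1}\Dbold$, $\Fbold^{-1}+\Dbold^\top\Fbold^{-1}\Dbold$, $\Fbold^{-1}$ and off-diagonal blocks $-\Dbold^\top\Fbold^{-1}$, $-\Fbold^{-1}\Dbold$ coincide with \eqref{eq:prec_matrix}. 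Since $\Rbold$ is block-triangular with identity diagonal blocks, $\lvert\Rbold\rvert=\lvert\Rbold^\top\rvert=1$, so $\lvert\Qbold\rvert=\lvert\boldsymbol{\Psi}\rvert=\lvert\Sigmabold^{-1}\rvert\,\lvert\Fbold^{-1}\rvert^{\,N_T}$, and taking logarithms yields \eqref{eq:logdet_Q}, with one factor $\log\lvert\Fbold^{-1}\rvert$ per temporal transition $\mathbf{x}^{(k)}\to\mathbf{x}^{(k+1)}$. Invertibility of $\Fbold$, needed for $\boldsymbol{\Psi}$ to be well defined, follows because $\Ebold=\frac{\tilde\tau\sqrt{dt}}{\sqrt c}\bigl(\Mbold+\frac{dt}{c}(\Kbold+\Bbold+\Sbold)\bigr)^{-1}\Mbold\Lbold_S^\top$ is a product of invertible factors after mass lumping.

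The second, purely algebraic, step is to derive the closed form for $\Fbold^{-1}$. Writing $\Tbold:=\Mbold+\frac{dt}{c}(\Kbold+\Bbold+\Sbold)$ and using $\Fbold=\Ebold\Ebold^\top$ from Proposition \ref{prop:grmf} together with $\Mbold^\top=\Mbold$ (diagonal after lumping), I obtain $\Fbold=\frac{\tilde\tau^2 dt}{c}\Tbold^{-1}\Mbold\,\Lbold_S^\top\Lbold_S\,\Mbold\Tbold^{-\top}$. Invoking the Cholesky convention $\Lbold_S^\top\Lbold_S=\Qbold_S^{-1}$ for the covariance $\Qbold_S^{-1}$ of $\Zbold_S$ collapses the middle factor to $\Qbold_S^{-1}$, and inverting the product (each factor being invertible) gives $\Fbold^{-1}=\frac{c}{\tilde\tau^2 dt}\,\Tbold^\top\Mbold^{-1}\Qbold_S\Mbold^{-1}\Tbold$, the announced expression.

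I expect the only genuine obstacle to be the block-by-block verification of $\Qbold=\Rbold^\top\boldsymbol{\Psi}\Rbold$, i.e. confirming that the cross-terms $\Dbold^\top\Fbold^{-1}\Dbold$ on the diagonal of \eqref{eq:prec_matrix} are produced by the triangular factor $\Rbold$ and not by $\boldsymbol{\Psi}$; once this telescoping is in place the determinant is immediate. An equivalent and more transparent route is probabilistic: the joint Gaussian density factorizes as $p(\mathbf{x}^{(0)})\prod_{k=0}^{N_T-1}p(\mathbf{x}^{(k+1)}\mid\mathbf{x}^{(k)})$, so the normalizing constant is the product of the individual normalizers, directly giving $\lvert\Qbold\rvert=\lvert\Sigmabold^{-1}\rvert\,\lvert\Fbold^{-1}\rvert^{\,N_T}$; I would include this as a cross-check on the number of $\Fbold^{-1}$ factors (one per transition of the chain).
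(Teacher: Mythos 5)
Your proof is correct, and it reaches the same factorization $\lvert\Qbold\rvert=\lvert\Sigmabold^{-1}\rvert\,\lvert\Fbold^{-1}\rvert^{N_T}$ as the paper's own intermediate step \eqref{eq:det_Q}, but by a different route. The paper applies the recursive block-determinant formula of Powell to the block-tridiagonal matrix \eqref{eq:prec_matrix}, so that the successive Schur complements telescope ($\Fbold^{-1}+\Dbold^\top\Fbold^{-1}\Dbold-\Dbold^\top\Fbold^{-1}\Fbold\Fbold^{-1}\Dbold=\Fbold^{-1}$ at each stage, and $\Sigmabold^{-1}$ at the top). You instead exhibit the unit-block-triangular congruence $\Qbold=\Rbold^\top\boldsymbol{\Psi}\Rbold$ coming from the innovation representation of the Markov chain \eqref{eq:timestep}; since $\lvert\Rbold\rvert=1$ the determinant is read off immediately, and your probabilistic cross-check (one normalizing constant per transition density $p(\mathbf{x}^{(k+1)}\mid\mathbf{x}^{(k)})$) makes the count of $\Fbold^{-1}$ factors transparent. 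Your route is arguably cleaner and is essentially a repackaging of the factorization $\Qbold=(\Rbold^{-1})^\top\mathrm{blockdiag}(\Sigmabold^{-1},\Ibold,\dots,\Ibold)\Rbold^{-1}$ already used in the paper's proof of Proposition \ref{prop:global_prec}, whereas the paper's proof of the present proposition starts over from the assembled block-tridiagonal form. One point worth flagging explicitly: your exponent is $N_T$ (one factor per transition among the $N_T+1$ time points), which agrees with the paper's own equation \eqref{eq:det_Q} where the exponent is $(N_T+1)-1$, but \emph{not} with the displayed statement \eqref{eq:logdet_Q}, which reads $(N_T-1)\log\lvert\Fbold^{-1}\rvert$; that appears to be a typo in the statement, and your count is the correct one. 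Your derivation of the closed form of $\Fbold^{-1}$ matches the paper's, modulo the same looseness about whether $\Lbold_S\Lbold_S^\top$ or $\Lbold_S^\top\Lbold_S$ equals $\Qbold_S^{-1}$, which you handle explicitly.
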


\begin{proof}
Following \citet{powell2011}, let $\Nbold_N = [\Nbold_{ij}]_{i,j=1}^N$ be an $(nN, nN)$ matrix, which is partitioned into $N^2$ blocks $\Nbold_{ij}$, each of size $(n, n)$. Then the determinant of $\Nbold_N$ is 
$$\lvert \Nbold_N\rvert = \prod_{k=1}^N \lvert \alpha_{kk}^{(N-k)}\rvert,$$
where the $\alpha^{(k)}$ are defined by
\begin{align*}
\alpha_{ij}^{(0)} &= \Nbold_{ij} \\
\alpha_{ij}^{(k+1)} &= \alpha_{ij}^{(k)} - \alpha_{i,N-k}^{(k)} (\alpha_{N-k,N-k}^{(k)})^{-1} \alpha_{N-k,j}^{(k)},\quad k \geq 1.
\end{align*}
$\Qbold$ is a block-matrix organized as $\Nbold_N$. Hence, the formula for $\lvert \Qbold\rvert$ is 
\begin{equation}
\lvert \Qbold\rvert= \lvert \Sigmabold^{-1}\rvert\lvert \Fbold^{-1}\rvert^{(N_T+1)-1}.
\label{eq:det_Q}
\end{equation}
Applying the logarithm, we obtain Equation \eqref{eq:logdet_Q}.
\end{proof}

Note that $\lvert \Fbold^{-1}\rvert$ is now the determinant of a $(N_S, N_S)$ sparse, symmetric and positive definite matrix. The 
log-determinant can be computed through its Cholesky decomposition as
$$\log\lvert \Fbold^{-1}\rvert=2\log(\prod_{i=1}^n F^{-1}_{\text{chol},ii}) = 2\sum_{i=1}^{N_S} \log(F^{-1}_{\text{chol},ii}),$$
since $\Fbold^{-1}_{\text{chol}}$ is a triangular matrix, whose determinant is the product of the diagonal elements.

The term $\log\lvert\Qbold_{\Abold}\rvert=\log\lvert\Qbold+\sigma_0^{-2}\Abold^\top\Abold\rvert$ requires a detailed analysis. The term $\sigma_0^{-2}\Abold^\top\Abold$ is an $(N_S(N_T+1),N_S(N_T+1))$ diagonal block matrix, whose $(N_S,N_S)$ blocks are sparse. The computation of $\log\lvert\Qbold_{\Abold}\rvert$ is not as straightforward as in the case of $\log\lvert\Qbold\rvert$, because there is no way of reducing the computation to purely spatial matrices. Depending on the size $N_S(N_T+1)$, we can either apply a Cholesky decomposition of the $(N_S(N_T+1), N_S(N_T+1))$ matrix $\Qbold_{\Abold}$ or the matrix-free approach proposed in \citet{pereira_desassis_allard_2022}, here sketched. The logarithm function is first approximated by a Chebyschev polynomial $P(\cdot)$ \citep{chebyshev1853}, then the Hutchinson's estimator \citep{hutchinson1990} is used to obtain a stochastic estimate of $\text{tr}[P(\Qbold_{\Abold})]$. The method is detailed in Algorithm 5 in \citet{pereira_desassis_allard_2022}.

\bigskip

Now, concerning the quadratic term of the log-likelihood \eqref{eq:loglike_latent}, we can work with the more convenient expression obtained thanks to the Woodbury formula
$$\Sigmabold_{\ybold_{0:N_T}}^{-1}=\sigma_0^{-2}\Ibold_{n}-\sigma_0^{-4}\Abold\Qbold_{\Abold}^{-1}\Abold^\top.$$
Hence
\begin{equation}
\begin{split}
(\ybold_{0:N_T}-\Lambdabold\bbold)^\top \Sigmabold_{\ybold_{0:N_T}}^{-1} &(\ybold_{0:N_T}  -  \Lambdabold\bbold) 
 = \sigma_0^{-2}(\ybold_{0:N_T}-\Lambdabold\bbold)^\top \Ibold_{n}(\ybold_{0:N_T}-\Lambdabold\bbold)\\
&-\sigma_0^{-4}(\ybold_{0:N_T}-\Lambdabold\bbold)^\top \Abold\Qbold_{\Abold}^{-1}\Abold^\top(\ybold_{0:N_T}-\Lambdabold\bbold).
\end{split}
\label{eq:quadratic_form}
\end{equation}
The second term of Equation \eqref{eq:quadratic_form} can be computed either by Cholesky decomposition or using the conjugate gradient (CG) method. This latter method solves $\Qbold_{\Abold} \vbold=\wbold$ with respect to $\vbold$ and computes $\vbold_{\text{sol}} = \wbold^\top \vbold$, with $\wbold=\Abold^\top(\ybold_{0:N_T}-\Lambdabold\bbold)$. In this case, it is useful to find a good preconditioner for the matrix $\Qbold_{\Abold}$ to ensure fast convergence of the CG method. We found that a temporal block Gauss-Seidel preconditioner \citep[Chapter 3]{young71} was a good choice in this case. A detailed presentation of the CG method is available in \citet{pereira_desassis_allard_2022}. 
\medskip


\subsection{Prediction by Kriging}
\label{sec:kriging}

Under a Gaussian assumption, optimal prediction is the conditional expectation, also known in the geostatistics literature as kriging.
We detail here two prediction settings: space-time interpolation and temporal extrapolation. 

In the space-time interpolation setting, the spatio-temporal vector  $\xbold_{0:N_T}$ is predicted on the entire spatial mesh during the time window $\left[0,T\right]$, i.e. on ${\cal T}'$, using the data $\ybold_{0:N_T}$  defined in Equation \eqref{eq:stat_model}. The kriging predictor is directly read from Equation \eqref{eq:augmented_prec_mat}:

\begin{equation}
\xbold_{0:N_T}^\star =  \mathds{E}(\xbold_{0:N_T}\mid \ybold_{0:N_T}) 
= \sigma_0^{-2} \Qbold_{\Abold}^{-1}\Abold^\top (\ybold_{0:N_T}- \Lambdabold\widehat{\bbold}).
    \label{eq:kriging}
\end{equation}

The computation of \eqref{eq:kriging} requires the inversion of $\Qbold_{\Abold}$, as detailed in Section \ref{sec:estimation_scattered}. The conditional variance, also called kriging variance, is 
$$
\text{Var}(\xbold_{0:N_T} \lvert \ybold_{0:N_T}) = \Qbold_{\Abold}^{-1}.
$$

The computation of the diagonal of an inverse matrix is not straightforward when only the Cholesky decomposition of the matrix is available. Among the existing methods there is the Takahashi recursive algorithm described in \citet{takahashi73} and \citet{erisman75}. Another way of computing the kriging variance is through conditional simulations, as detailed in Section \ref{sec:conditional_simu}.

In the temporal extrapolation setting, the vector $\xbold^{(N_T+1)}$ is predicted at time step $(N_T+1)$ on $\cal T$ using all the data available until time $T$, i.e. from $\ybold_{0:N_T}$. Following Equation \eqref{eq:timestep}, we have
\begin{equation}
\xbold^{(N_T+1)} = \Dbold\xbold^{(N_T)} + \Ebold \zbold^{(N_T+1)},
\label{eq:dynamics}
\end{equation}
where $\zbold^{(N_T+1)}$ is a standardized Gaussian vector, and $\Dbold$ and $\Ebold$ are defined in Proposition \ref{prop:grmf}. The kriging predictor $\xbold^{\star(N_T+1)}$ is
\begin{eqnarray}
\xbold^{\star(N_T+1)} & = & \mathds{E}(\xbold^{(N_T+1)}\mid \ybold_{0:N_T}) = \Dbold \mathds{E}(\xbold^{(N_T)}\mid \ybold_{0:N_T}) = \Dbold \xbold^{\star(N_T)},
\label{eq:kriging_t1}
\end{eqnarray}
where $\xbold^{\star(N_T)}$ is extracted from $\xbold_{0:N_T}^\star$. The same procedure can be iterated to predict $\xbold$ at further time steps.

\subsection{Conditional simulations}
\label{sec:conditional_simu}

To perform a conditional simulation, we use the conditional kriging paradigm presented below. This approach relies on the fact that  kriging predictors and kriging residuals are uncorrelated (independent under Gaussian assumption, see \citet[Chapter 7]{chiles99}). First, a non-conditional simulation $\xbold^{(NC)}_{0:N_T}$ is performed on the spatio-temporal grid ${\cal T}'$. From this simulation, kriging residuals
$$\rbold_{0:N_T} = 
\mathds{E}\left(\xbold_{0:N_T} \mid \Abold \xbold^{(NC)}_{0:N_T}\right) - \xbold^{(NC)}_{0:N_T}
$$
are computed over the entire spatio-temporal grid $\mathcal{T}'$. The conditional expectation is computed using the method presented in the previous section. In a second step, these independently generated residuals are added to the usual kriging of the data to get the conditional simulation 
$$\xbold^{(C)}_{0:N_T} = \xbold^{*}_{0:N_T} + \rbold_{0:N_T}.$$
Conditional simulations at further time steps are obtained by iteratively computing   
$\xbold^{(C)}_{N_T+k}$  using the propagation equation \eqref{eq:dynamics} with $k\geq 1$. Multiple independent realizations of conditional simulations can then be used to compute estimates of conditional variances or other quantities, such as probability maps of threshold exceedance.


\subsection{Simulation study}

We report here some results regarding the estimation of the parameters $\thetabold^\top = [\kappa, \gamma_1, \gamma_2, c, \tau]$ for 50 independent realisations of a spatio-temporal model simulated with the SPDE \eqref{eq:adv_diff}. We set $\Hbold=\Ibold$, $\alpha=1$ and $\alpha_S=2$.  The spatial domain is the $[0,30]^2$ square with a grid triangulation of $N_S=900$ spatial points. The time window is $[1,10]$ with unit time step and $N_T+1=10$. The $n_S=100$ observations are randomly located into the spatial domain and their position do not change during the $N_T$ time steps (hence $n=1000$). Since the sizes of both the dataset and the spatio-temporal mesh are reasonable, we report the estimations computed with both the Cholesky decomposition approach and the matrix-free approach. 

As initial values, we use estimated values obtained from the variograms of the spatial and temporal traces of the process. Specifically, the initial value for $\kappa$ is the estimated scale parameter of a Mat\'ern covariance function with smoothness parameter $\nu = \alpha +\alpha_S - 1=2$ considering independent temporal repetitions, the initial value for $c$ is deduced from the estimated parameter of $n_S$ independent repetitions of AR(1) processes of length $(N_T+1)$ and $\tau^2$ is computed from Equation \eqref{eq:spat_trace} with $\sigma^2$ being the empirical variance computed on the data. Finally, the initial value for $\gammabold$ is the null vector. The parameters for the matrix-free approach  are set to the following: the order of the Chebychev polynomial to approximate the logarithm is set to 30 and the number of terms in the sum of the  Hutchinson's estimator is set to 10. The results are reported in Table \ref{tab:ml_spatio_temp}. They show that all parameters are accurately estimated with both approaches. In almost all cases, the true value of the parameter is within the mean $\pm$ 2 standard deviations interval. We remark how the matrix-free approach takes more time to estimate the parameters. This is due to the iterative computations, that increase the computational time. However, we know that the benefit of the matrix-free approach is the possibility of applying it to much larger spatio-temporal meshes, where the Cholesky decomposition cannot be applied at all.

\begin{table}[hbt]
\begin{adjustbox}{width=\columnwidth,center}
\begin{tabular}{c|cccccc}
\toprule
method & $\kappa$ & $\gamma_1$ & $\gamma_2$ & $c$ & $\tau$ & average time $(s)$\\
\midrule
    correct & 0.5 & 2 & 2 & 1 & 1 & \\
	Cholesky & 0.610 (0.047) & 2.354 (0.515) & 2.325 (0.421) & 1.037 (0.218) & 1.072 (0.040) & 194 \\
    Matrix-free & 0.483 (0.029) & 1.904 (0.178) & 1.906 (0.147) & 1.027 (0.046) &  0.998 (0.024) & 960\\
	\midrule
    correct & 0.7 & 1 & -1 & 2 & 0.5 & \\
	Cholesky & 0.695 (0.067) & 1.056 (0.659) & -1.134 (0.631) & 2.090 (0.352) & 0.503 (0.022) & 172\\
    Matrix-free & 0.669 (0.052) & 0.967 (0.123) &  -1.164 (0.142) & 1.954 (0.118) & 0.485 (0.013) & 863 \\
\bottomrule
\end{tabular}
\end{adjustbox}
\caption{Mean (and standard deviation) of ML estimates $\hat{\thetabold}^\top = [\hat{\kappa}, \hat{\gamma}_1, \hat{\gamma}_2, \hat{c}, \hat{\tau}]$ over 50 simulations for two different subsets of advection-diffusion model parameters and two different estimation approaches (Cholesky decomposition and matrix-free approach).}

\label{tab:ml_spatio_temp}
\end{table}


\section{Application to a solar radiation dataset}
\label{sec:application}

With the constantly increasing installation of photovoltaic (PV) power and its volatility due to weather conditions, characterizing short-term variability of generated solar power (from the minute resolution to the 15-minute resolution) is important for the integration of PV systems into the electrical grid,  for balancing supply and demand \citep{kreuwel2020analysis}.  Fluctuations in solar production can have a significant impact on grid stability, and accurate prediction allows for planning necessary adjustments, such as modulating the production of other energy sources, to avoid service interruptions and optimize resource utilization.

The approach detailed in the previous sections is now applied to a solar radiation dataset for which experts agree on the presence of advection due to Western prevailing winds transporting clouds from one side of the domain to the other. The HOPE campaign \citep{macke2017} recorded  Global Horizontal Irradiance (GHI) (also called  SSI, Surface Solar Irradiance) over a $10 \times 16 \ km^2$ region in West Germany near the city of J\"ulich  from April 2 to July 2, 2013. The sensors were located at 99 stations located as pictured in Figure \ref{fig:locations} and GHI was recorded every 15 seconds. A detailed description of the campaign can be found in \citet{macke2017}.

\begin{figure}[htb]
\centering
\includegraphics[width=0.35\textwidth]{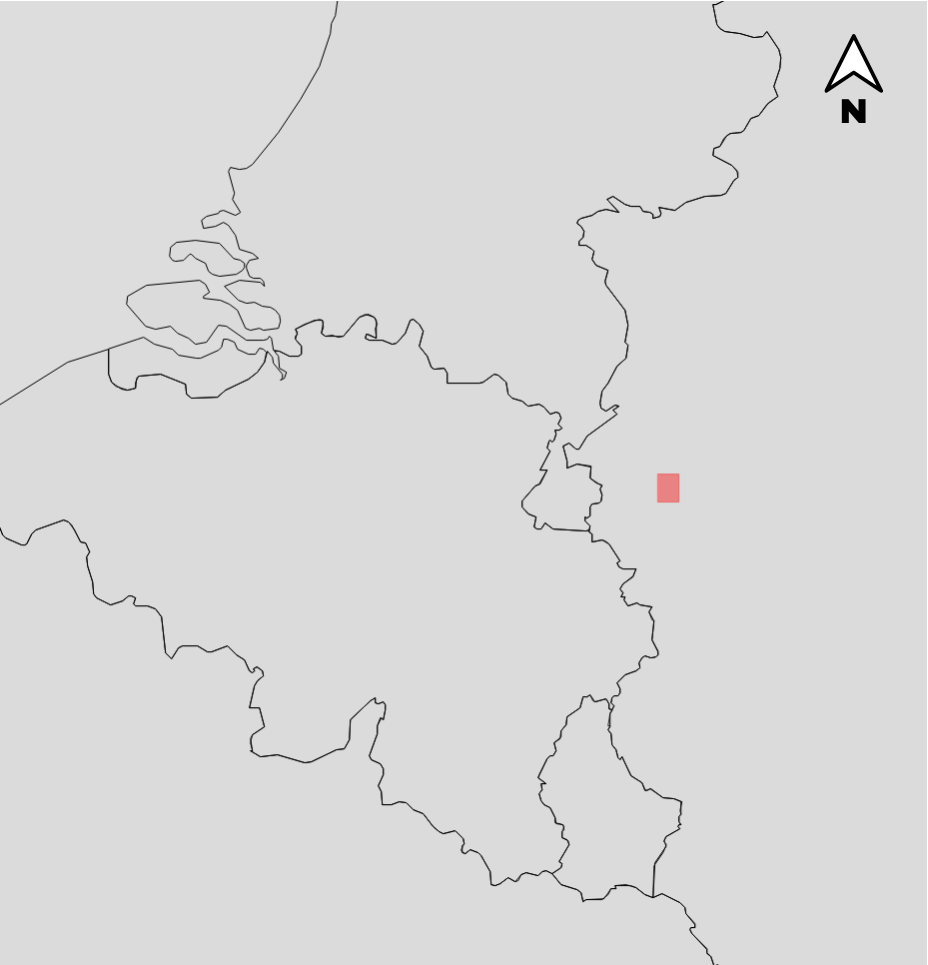}
\includegraphics[width=0.35\textwidth]{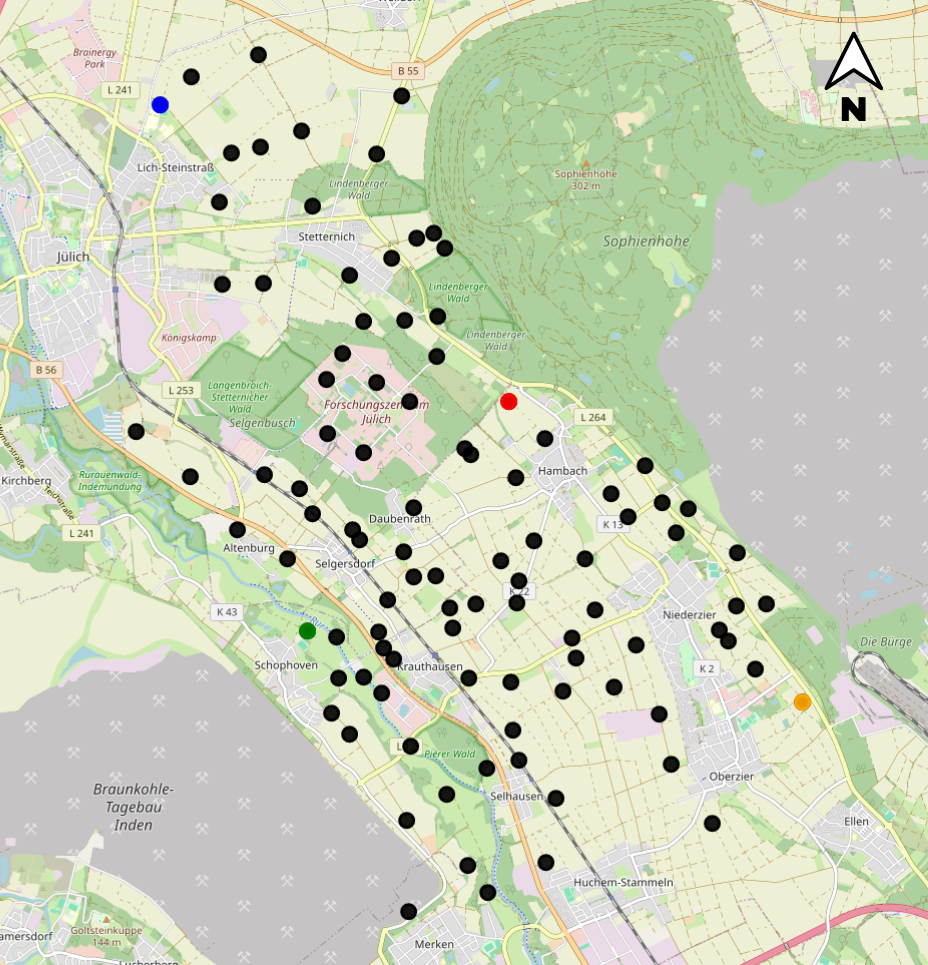}
\caption{Left: spatial domain of study (red square). Right: zoom on the spatial domain along with measuring stations.}
\label{fig:locations}
\end{figure}

The dataset was cleaned for outlying values and non-operating sensors, and the temporal resolution was reduced from 15 seconds to 1 minute. Figure \ref{fig:4_stations} (left panel) shows  GHI as a function of time (in minute, during a full day -- the 28th of May 2013) at 4 different stations. These stations, represented in color in Figure \ref{fig:locations},  are located at the border of the domain, far from each other. The GHI starts close to 0, increases after sunrise, peaks at midday and tends to 0 at sunset. The maximal theoretical amount of irradiance reaching the sensor follows an ideal concave curve. The divergence between the measured irradiance and the optimal curve can be slight or important, depending on the presence of clouds. One can see on this example that the evolution among the 4 stations is similar, with variations accounting for spatio-temporal variations of the clouds.

\begin{figure}
\centering
\includegraphics[width=0.45\textwidth]{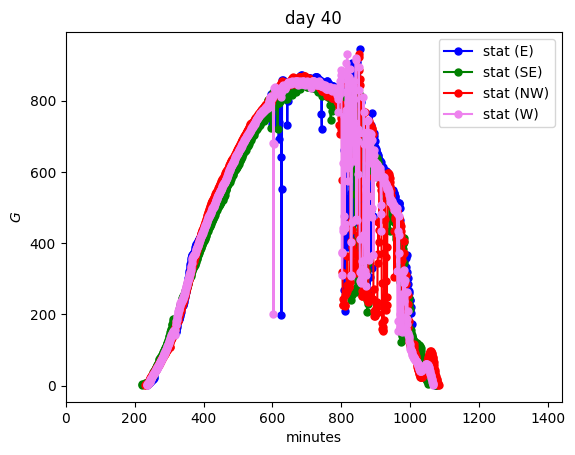}
\includegraphics[width=0.45\textwidth]{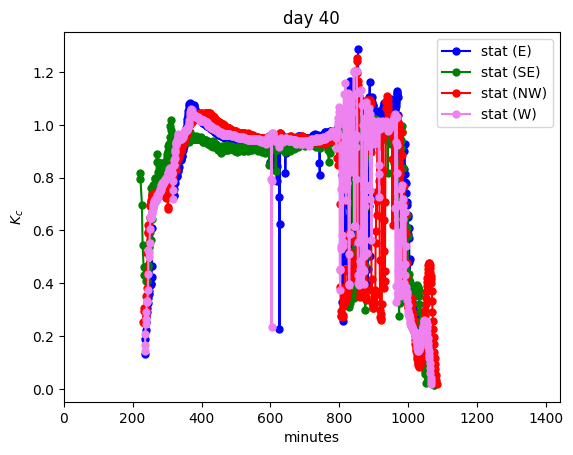}
\caption{GHI $G$ and Clear Sky Index $K_c$ for 4 different stations on the 28th of May 2013}
\label{fig:4_stations}
\end{figure}

A first preprocessing was made in order to stationarize the phenomenon. \citet{oumbe2014} showed that the solar irradiance  at ground level, GHI (denoted $G$ for short from now on), computed by a radiative transfer model can be approximated by the product of the irradiance under clear atmosphere (called Clear Sky GHI, or $G_c$) and a modification factor due to cloud properties and ground albedo only (Clear Sky Index, or $K_c$, \citet{beyer96}):
\begin{equation}
    G \simeq G_c K_c.
\label{eq:ghi_general} 
\end{equation}
The error made using this approximation depends mostly on the solar zenith angle, the ground albedo and the cloud optical depth. In most cases, the maximum errors (95th percentile) on global and direct surface irradiances are less than 15 $Wm^{-2}$ and less than 2 to 5 \% in relative value, as recommended by the World Meteorological Organization for high-quality measurements of the solar irradiance \citep{oumbe2014}. Practically, it means that a model for fast calculation of surface solar irradiance may be separated into two distinct and independent models:  a deterministic model for $G$, under clear-sky conditions, as computed according to \citet{gschwind2019}, considered as known in this study, and a model for $K_c$, which accounts for cloud influence on the downwelling radiation and is expected to change in time and space. $K_c$ is modeled as a random spatio-temporal process and will be the subject of our analysis. Figure  \ref{fig:4_stations} (right panel) shows the variable $K_c$ corresponding to the variable $G$ shown on the left panel. In general, $K_c$ lies between 0 and 1, but in rare occasions, values above 1 can be observed. This phenomenon is called \textit{overshooting} \citep{schade2007} and is due to light reflection by surrounding clouds.

A time window of 20 minutes around 4 p.m. on May 28, 2013 is extracted with observations every minute at the 73 stations with well recorded values. The histogram and time series of the data are shown in Figure \ref{fig:hist}. Parameters are estimated on this 20-minute window using the method described in Section \ref{sec:estimation_scattered}. The spatio-temporal grid contains $N_T+1 = 20$ one-minute time steps, from $t=1$ to $t=20$ and $N_S = 900$ regular spatial mesh points.

\begin{figure}
\centering
\includegraphics[width=0.4\textwidth]{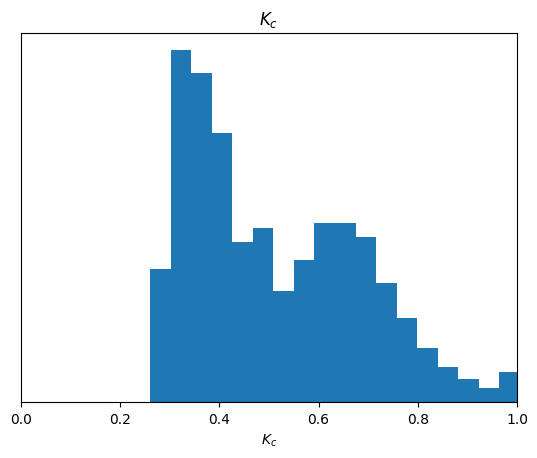}
\includegraphics[width=0.43\textwidth]{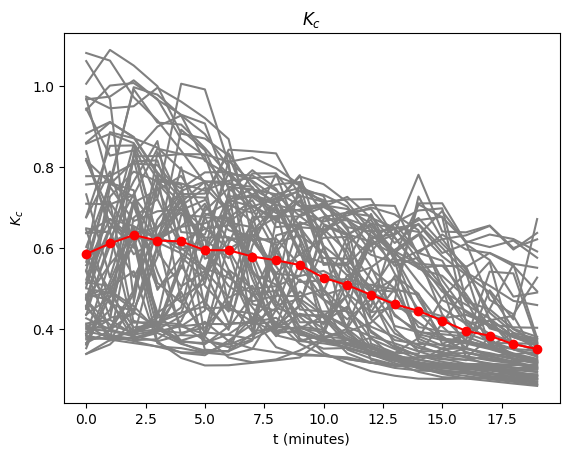}
\caption{Left: Histogram of $K_c$ over 20 time steps. Right: Time series of $K_c$ for all the stations over 20 time steps, along with the mean value for each time step (in red)}
\label{fig:hist}
\end{figure}

\subsection{Estimation and prediction}
\label{sec:results}

Six different models are fitted to the data and used for  prediction: 3 models with advection (called “adv-diff") and 3 models without advection (called “diff") obtained by setting $\gammabold=\0bold$. Both groups contain the three following sub-models: (i) a model with diffusion included only in the stochastic forcing term, with a Mat\'ern spatial trace with $\nu = 1$; (ii) a nonseparable model that does not have a Mat\'ern spatial trace, but a generalized covariance function instead (the spatial covariance function exists only if $\alpha+\alpha_S>d/2$, where $d=2$ in this case, see Proposition \ref{prop:spatial_trace}); in this case, the SPDE generates a process which only has meaning as a random measure, and cannot strictly be interpreted at individual
locations; (iii) a nonseparable model with a Mat\'ern spatial trace with $\nu = 2$. In the general model of Equation \eqref{eq:adv_diff} they correspond respectively to $(\alpha,\alpha_S)=(0,2),(1,0),(1,2)$. The parameters of the SPDE are estimated for each model separately. The results are reported in Table \ref{tab:estimates}. 

The log-likelihoods of the models that include advection are  within a range of variations of 10 log-likelihood units and are between 34 to 80 units larger than those with diffusion only. As a point of comparison, if all spatio-temporal dependencies were ignored, the BIC penalization for the advection parameters would be $2 \ln(1460)\simeq 14.5$. These results indicate strong evidence in favor of models with advection, but no significant differences among them. The parameters vary substantially from one model to the other, but it must be remembered that, when considered independently, their physical interpretation is model dependent. Some combinations are interpretable however. For example, following Proposition \ref{prop:spatial_trace}, the overall variance is equal to $(8\pi)^{-1}\tau^2 \kappa^{-2}$ (or $(8\pi)^{-1}\tau^2\kappa^{-2}\lvert\Ibold + h\norm{\gammabold}^{-1}\gammabold\gammabold^\top \rvert^{-1/2}$ in the stabilized case) when $\alpha_{\text{tot}} = \alpha+\alpha_S=2$ and it is equal to $(16\pi)^{-1}\tau^2\kappa^{-4}$ (or $(16\pi)^{-1}\tau^2 \kappa^{-4}\lvert\Ibold + h\norm{\gammabold}^{-1}\gammabold\gammabold^\top \rvert^{-1/2}$) when $\alpha_{\text{tot}}=3$.
Accordingly, the estimated standard deviations for models (1), (3), (4) and (6) are equal to 0.160, 0.119, 0.174 and 0.199 respectively, with the experimental standard deviation being equal to 0.184.
For the same models, the practical ranges computed as $\sqrt{8\nu}/\kappa$ \citep{lindgren2011} are equal to 1.915, 3.079, 2.281 and 4.255 respectively. Notice that among pairs of models that differ by the presence or absence of advection, the estimated range is larger for those without advection in an attempt to account for the larger correlation distance due to transport.

\begin{table}[hbt]
\begin{adjustbox}{width=\columnwidth,center}
\begin{tabular}{cccc|c|ccccccc}
\toprule
& Model & $\alpha$ & $\alpha_S$ & log-likelihood & $\hat{\kappa}$ & $\hat{\gamma}_x$ & $\hat{\gamma}_y$ & $\hat{c}$ & $\hat{\tau}$ & $\hat{\sigma}_0$ & $\hat{b}$\\
\midrule
(1) & adv-diff & 0 & 2 & 2587 & 1.477 & 9.642 & -5.382 & 11.659 & 2.254 & 0.052 & 0.570\\
(2) & adv-diff & 1 & 0 & 2577 & 0.237 & 4.718 & -0.928 & 9.315 & 0.458 & 0.045 & 0.598\\
(3) & adv-diff & 1 & 2 & 2579 & 1.299 & 17.325 & -8.442 & 41.017 & 3.072 & 0.058 & 0.574\\
(4) & diff & 0 & 2 & 2507 & 1.240 & 0 & 0 & 12.558 & 1.081 & 0.059 & 0.569\\
(5) & diff & 1 & 0 & 2545 & 0.246 & 0 & 0 & 6.594 & 0.436 & 0.047 & 0.577\\
(6) & diff & 1 & 2 & 2512 & 0.940 & 0 & 0 & 34.607 & 1.248 & 0.064 & 0.580\\
\botrule
\end{tabular}
\end{adjustbox}
\caption{Estimated parameters and log-likelihood for 6 different models from all data on a 20-minute window}
\label{tab:estimates}
\end{table}

With the objective of improving the predictions at the 15-minute resolution needed for the electrical grid management, we then perform prediction with two different validation settings containing 80\% of conditioning data and 20\% of validation data. In the first case (called “Uniform") the validation locations are uniformly randomly selected. In the second case (called “South-East") the validation locations are located downwind (i.e. South-East) with respect to the estimated advection direction. See Figure \ref{fig:stat} for a representation of the validation settings.

\begin{figure}
\centering
\includegraphics[width=0.32\textwidth]{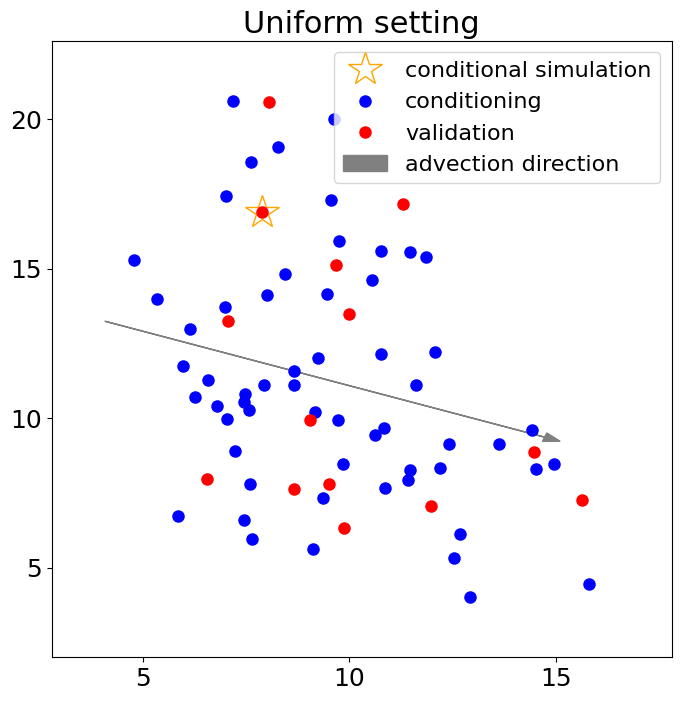}
\includegraphics[width=0.32\textwidth]{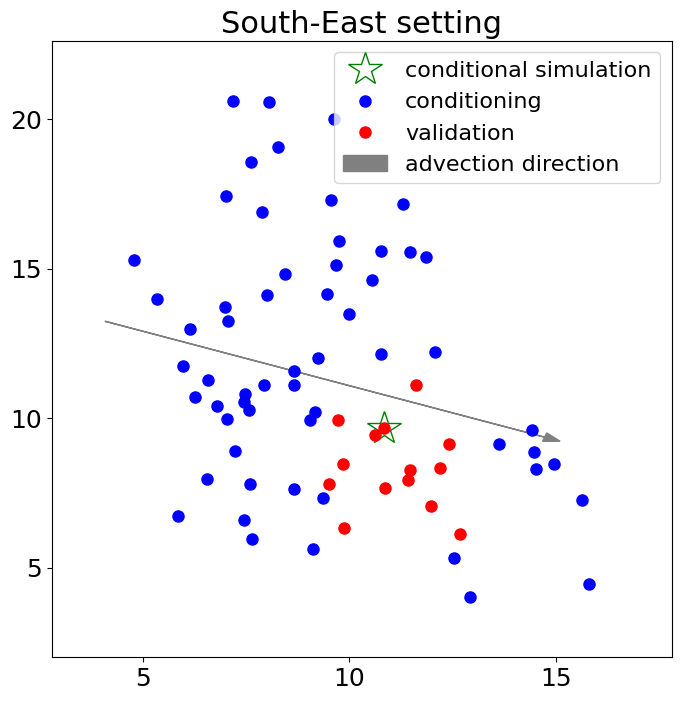}
\caption{Validation settings: Uniform (left) and South-East (right)}
\label{fig:stat}
\end{figure}

\begin{figure}
\centering
\includegraphics[width=0.99\textwidth]{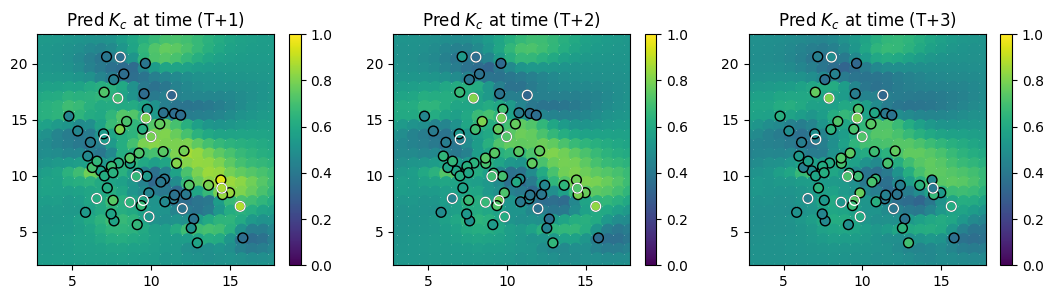}
\caption{Predictions of $K_c$ at $(T+1)$, $(T+2)$ and $(T+3)$ with model (3) (“adv-diff" with $\alpha=1$ and $\alpha_S=2$). The black contoured dots are the conditioning locations and the white contoured dots are the validation locations for the Uniform setting}
\label{fig:error}
\end{figure}

Recall that a time window containing 20 unit time-steps, from $t=1$ to $t=20$, has been selected. For each validation setting and considering each time the end of the time window from $T=11$ to $T=20$, three prediction configurations using conditioning data from time $(T-9)$ to $T$ are computed and compared to the real values, allowing us to compute a root mean square error (RMSE) validation score. First, the kriging is performed spatially only (hereafter referred to as “S" kriging). Second, a temporal extrapolation is computed at the conditioning locations at time horizons $(T+1)$, $(T+2)$ and $(T+3)$  (“T1", “T2", “T3" kriging). Third, the spatio-temporal prediction is computed at the validation locations at time horizons $(T+1)$, $(T+2)$ and $(T+3)$ (“ST1", “ST2", “ST3" kriging). 
We thus have a total of 6 models $\times$ 2 validation settings $\times$ 3 prediction configurations. RMSEs are averaged over the 10 repetitions. Results are reported in Table \ref{tab:scores} and those of the ST prediction configuration are also shown in Figure \ref{fig:rmse}.

\begin{table}[h!]
\begin{adjustbox}{width=0.95\columnwidth,center}
\begin{tabular}{cccc|ccc}
\toprule
& &  &  & \multicolumn{3}{c}{\textbf{Uniform}} \\
\midrule
& Model & $\alpha$ & $\alpha_S$ & \multicolumn{3}{c}{S (min,max)}\\
\midrule
(1) & adv-diff & 0 & 2 & \multicolumn{3}{c}{\textbf{0.088 (0.052,0.127)}} \\
(2) & adv-diff & 1 & 0 & \multicolumn{3}{c}{0.103 (0.064,0.142)} \\
(3) & adv-diff & 1 & 2 & \multicolumn{3}{c}{0.102 (0.062,0.134)} \\
(4) & diff & 0 & 2 & \multicolumn{3}{c}{0.119 (0.074,0.140)}\\
(5) & diff & 1 & 0 & \multicolumn{3}{c}{0.094 (0.060,0.131)}\\
(6) & diff & 1 & 2 & \multicolumn{3}{c}{0.110 (0.066,0.132)}\\
\midrule
& Model & $\alpha$ & $\alpha_S$ & T1 (min,max) & T2 (min,max) & T3 (min,max)\\
\midrule
(1) & adv-diff & 0 & 2 & 0.095 (0.067,0.120) & 0.146 (0.111,0.186) & 0.181 (0.131,0.236)\\
(2) & adv-diff & 1 & 0 & \textbf{0.071 (0.046,0.090)} & \textbf{0.093 (0.054,0.124)} & \textbf{0.102 (0.060,0.143)}\\
(3) & adv-diff & 1 & 2 & 0.072 (0.046,0.093) & 0.095 (0.054,0.127) & 0.104 (0.055,0.144)\\
(4) & diff & 0 & 2 & 0.094 (0.058,0.123) & 0.137 (0.091,0.181) & 0.166 (0.102,0.231)\\
(5) & diff & 1 & 0 &  0.079 (0.058,0.098) & 0.108 (0.082,0.135) & 0.124 (0.099,0.158)\\
(6) & diff & 1 & 2 & 0.083 (0.054,0.108) & 0.110 (0.077,0.149) & 0.125 (0.085,0.180)\\
\midrule
& Model & $\alpha$ & $\alpha_S$ & ST1 (min,max) & ST2 (min,max) & ST3 (min,max)\\
\midrule
(1) & adv-diff & 0 & 2 & 0.105 (0.067,0.144) & 0.147 (0.106,0.193) & 0.179 (0.140,0.231)\\
(2) & adv-diff & 1 & 0 & 0.091 (0.052,0.131) & 0.103 (0.062,0.161) & 0.110 (0.071,0.165)\\
(3) & adv-diff & 1 & 2 & \textbf{0.085 (0.050,0.127)} & \textbf{0.094 (0.058,0.142)} & \textbf{0.100 (0.061,0.157)}\\
(4) & diff & 0 & 2 & 0.123 (0.072,0.186) & 0.150 (0.081,0.237) & 0.170 (0.116,0.257)\\
(5) & diff & 1 & 0 & 0.104 (0.070,0.153) & 0.122 (0.095,0.181) & 0.131 (0.095,0.187)\\
(6) & diff & 1 & 2 & 0.108 (0.073,0.153) & 0.126 (0.085,0.188) & 0.134 (0.082,0.199)\\
\midrule
& &  &  & \multicolumn{3}{c}{\textbf{South-East}}\\
\midrule
& Model & $\alpha$ & $\alpha_S$ & \multicolumn{3}{c}{S (min,max)}\\
\midrule
(1) & adv-diff & 0 & 2 & \multicolumn{3}{c}{\textbf{0.103 (0.051,0.138)}}\\
(2) & adv-diff & 1 & 0 & \multicolumn{3}{c}{0.105 (0.045,0.158)}\\
(3) & adv-diff & 1 & 2 & \multicolumn{3}{c}{0.109 (0.054,0.149)}\\
(4) & diff & 0 & 2 & \multicolumn{3}{c}{0.134 (0.092,0.181)}\\
(5) & diff & 1 & 0 & \multicolumn{3}{c}{0.136 (0.067,0.187)}\\
(6) & diff & 1 & 2 & \multicolumn{3}{c}{0.140 (0.085,0.192)}\\
\midrule
& Model & $\alpha$ & $\alpha_S$ & T1 (min,max) & T2 (min,max) & T3 (min,max)\\
\midrule
(1) & adv-diff & 0 & 2 & 0.095 (0.065,0.122) & 0.142 (0.106,0.185) & 0.172 (0.121,0.228)\\
(2) & adv-diff & 1 & 0 & 0.074 (0.045,0.099) & 0.097 (0.065,0.128) & 0.109 (0.069,0.148)\\
(3) & adv-diff & 1 & 2 & \textbf{0.074 (0.049,0.097)} & \textbf{0.096 (0.062,0.122)} & \textbf{0.106 (0.061,0.139)}\\
(4) & diff & 0 & 2 & 0.090 (0.063,0.116) & 0.128 (0.097,0.167) & 0.154 (0.113,0.209)\\
(5) & diff & 1 & 0 & 0.081 (0.057,0.105) & 0.111 (0.090,0.147) & 0.128 (0.100,0.169)\\
(6) & diff & 1 & 2 & 0.084 (0.056,0.109) & 0.109 (0.086,0.152) & 0.123 (0.091,0.176)\\
\midrule
& Model & $\alpha$ & $\alpha_S$ & ST1 (min,max) & ST2 (min,max) & ST3 (min,max)\\
\midrule
(1) & adv-diff & 0 & 2 & 0.102 (0.081,0.121) & 0.158 (0.130,0.180) & 0.210 (0.160,0.241)\\
(2) & adv-diff & 1 & 0 & \textbf{0.079 (0.044,0.119)} & \textbf{0.075 (0.039,0.127)} & \textbf{0.070 (0.038,0.132)}\\
(3) & adv-diff & 1 & 2 & 0.090 (0.052,0.121) & 0.099 (0.056,0.149) & 0.109 (0.057,0.172)\\
(4) & diff & 0 & 2 & 0.128 (0.092,0.157) & 0.165 (0.112,0.195) & 0.199 (0.116,0.236)\\
(5) & diff & 1 & 0 & 0.107 (0.050,0.201) & 0.100 (0.042,0.223) & 0.094 (0.0380,0.217) \\
(6) & diff & 1 & 2 & 0.114 (0.060,0.204) & 0.111 (0.058,0.229) & 0.108 (0.059,0.223)\\
\botrule
\end{tabular}
\end{adjustbox}
\caption{Averaged RMSE computed at 10 successive time steps for 6 different models, 2 validation settings (Uniform and South-East) and 3 prediction configurations (S, T and ST); see text for details. In each case, the best score among the models is in bold font}
\label{tab:scores}
\end{table}

\begin{figure}
\centering
\includegraphics[width=0.49\textwidth]{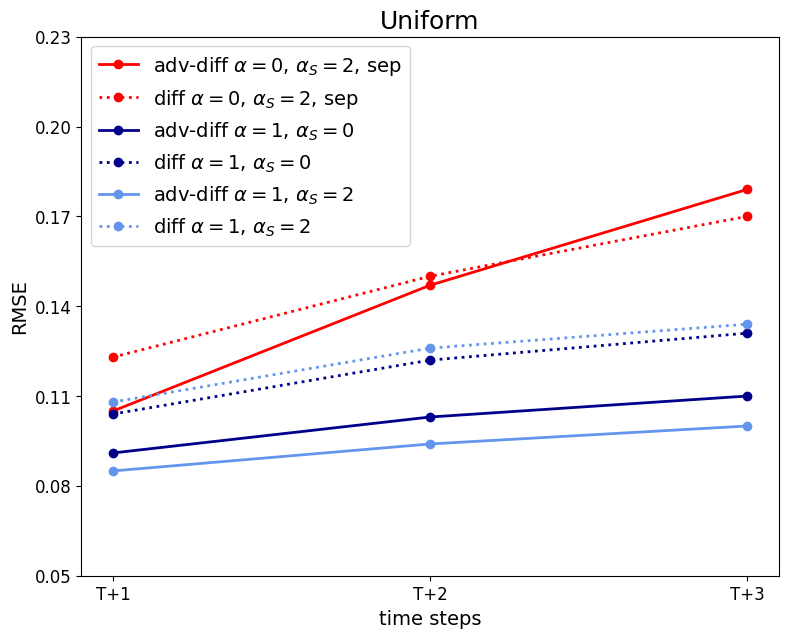}
\includegraphics[width=0.49\textwidth]{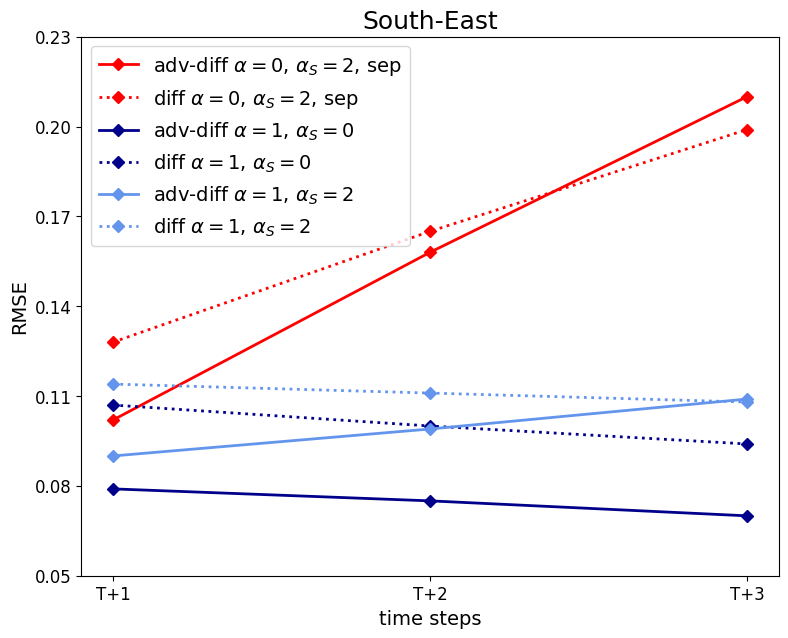}
\caption{Averaged RMSE for ST prediction configuration for 6 different models and 2 validation settings: Uniform (left) and South-East (right)}
\label{fig:rmse}
\end{figure}

For all tested validation settings and prediction configurations, the models with advection show better RMSE scores than models without advection. This result is a confirmation of the results already observed on log-likelihoods. Models with advection have similar prediction scores in the prediction configurations S and T, model (1) having slightly better performances in the configuration S. In the T and ST configurations, models (2) and (3) have in general quite similar RMSEs, except in the South-East setting with ST configuration where model (2) is clearly the best model. In this case, prediction is made in a space-time domain lying downstream with respect to the advection. It is thus expected that the model best representing the underlying physics should lead to the best prediction performances. 


An example of prediction maps on $\mathcal{T}$ at time horizons $(T+1)$, $(T+2)$ and $(T+3)$ is reported in Figure \ref{fig:error}, along with the observed values (black contoured dots). The white contoured dots are the locations used for validation in the Uniform setting.

\subsection{Conditional simulations}

Figure \ref{fig:cond_simu} shows 100 conditional simulations of $K_c$ computed at time $T=11$ and horizons $(T+1), (T+2), \dots, (T+6)$ with the advection-diffusion model (3). Two validation stations have been selected: one in the North-West part of the domain (the orange star in the left panel of Figure \ref{fig:stat}) and one in the South-East part of the domain (the green star in the right panel of Figure \ref{fig:stat}). Given that there is an advection from NW to SE, it is therefore expected that the advection-diffusion model should be able to transport the information in that direction. The mean of the 100 simulations and the envelopes corresponding to twice the pointwise standard deviation have also been represented, along with the true values. As expected, most of the conditional simulations lie within the envelopes in both cases and at all time horizons. The remarkable result is that the variance of the conditional simulations at the green station is smaller than that at the orange one at every time step, especially when the time horizon increases. This is due to the advection term in model (3), able to propagate information from North-West to South-East. 

\begin{figure}
\centering
\includegraphics[width=0.49\textwidth]{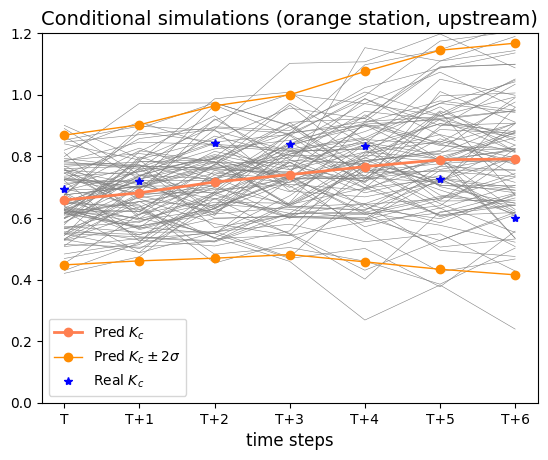}
\includegraphics[width=0.49\textwidth]{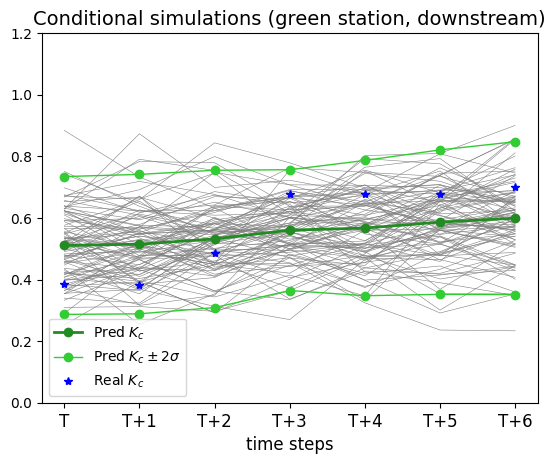}
\caption{Real $K_c$, mean of conditional simulations of $K_c$ and $\pm2\sigma$ envelope at time horizons $T,(T+1),(T+2),\dots,(T+6)$. Left: orange station in the North-West part of the domain. Right: green station in the South-East part of the domain}
\label{fig:cond_simu}
\end{figure}


\section{Conclusion}
\label{sec:discussion}

The spatio-temporal SPDE approach based on advection-diffusion equations proposed in this work combines elements of physics, numerical analysis and statistics. It can be seen as a first step toward \textit{physics informed geostatistics}, which introduces physical dynamics into a  statistical model, accounting for possible hidden structures governing the evolution of the spatio-temporal phenomenon.  
The different terms of the SPDE (advection, diffusion) directly influence the spatio-temporal dependencies of the process, by controlling its variability in space and time. Compared to spatio-temporal models built on covariance functions such as the Gneiting class \citep{gneiting2002}, we gain in interpretability since the parameters of the model can be linked to the physical coefficients of SPDEs.

We showed that it is possible to build an accurate space-time approximations of the process driven by the advection-diffusion SPDE using a combination of FEM in space and implicit Euler scheme in time. It leads to sparse structured linear systems. We obtained
promising results for the estimation and for the prediction of processes both in terms of precision and speed. When the size of the dataset is moderate, direct matrix implementation is possible. We showed how matrix-free methods can be implemented in order to obtain scalable computations even for very large datasets. The application to the solar radiation dataset demonstrated that the nonseparable advection-diffusion model exhibited the best prediction performances on a phenomenon that is certainly governed by advection and diffusion processes.  Nonetheless, further work is necessary to better assess the prediction accuracy and the computational complexity. Applications to larger and more complex datasets, in particular using the matrix-free approach, will be considered.

Further work is also necessary to compare the proposed approach to competing ones. In the spirit of \citet{heaton19}, which focused on spatial models, a comparison aimed at spatio-temporal processes showing dominating advection would be of great interest. For example, \citet{okasaki2022} considered spatio-temporal SPDEs without advection or with an advection that does not dominate diffusion. Moreover, the moderate size of the dataset considered allowed the use of direct matrix computations whereas, as discussed above, our approach is scalable. Comparison to models expressing the advection in a Lagrangian framework \citep{ailliot2011space,benoit2018stochastic,salvana2021} should also be performed. 

A maximum likelihood procedure was implemented. As a follow-up work, it would be interesting to implement this space-time model as part of a Bayesian hierarchical construction, possibly within the INLA/SPDE framework \citep{rue2009,krainski2018advanced}, which already propose separable spatio-temporal models and will probably soon include the diffusion spatio-temporal SPDE model proposed in \citet{lindgren2022diffusion}. The Bayesian framework would enable the assessment of estimation and prediction uncertainty. Extending the INLA approach to deal with advection-diffusion models is left for future work. Different parameter estimation methods could also be investigated, such as “online" methods, which recursively estimate the unknown model parameters based on the continuous stream
of observations \citep{sharrock2022}. 

One of the main advantages of the SPDE formulation is that it is easy to generalize to nonstationary settings. Nonstationary fields can be defined by letting the parameters $\kappa(t,\sbold)$ and $\gammabold(t,\sbold)$ be space-time-dependent. This generalization implies only minimal changes to the method used in the stationary case concerning the simulation, but needs more work for estimation and prediction, since the maximum likelihood approach becomes much more expensive. We can also incorporate models of spatially varying anisotropy by modifying the general operator $\nabla \cdot \Hbold(t,\sbold)\nabla X(t,\sbold)$ with a nonstationary anisotropic matrix $\Hbold(t,\sbold)$. The introduction of nonstationarities could allow to better describe phenomena where local variations are clearly present. The approaches by \citet{fulgstad2015} and \citet{pereira_desassis_allard_2022} have being investigated and generalized to the spatio-temporal framework, but are left for a follow-up publication. In a nonstationary context, a comparison to the echo state networks propossed in \citet{huang22} would also be of great interest.

Another interesting consequence of defining the models through local stochastic partial differential equations is that the SPDEs still make sense when $\bbR^d$ is replaced by a space that is only locally flat. We can define nonstationary Gaussian fields on manifolds, and still obtain a GMRF representation. Important improvements were obtained in the spatial case \citep{pereira_desassis_allard_2022}. The generalization to space-time processes is being explored further \citep{pereira2023hal}.

Possible generalization to spatio-temporal SPDEs with a fractional exponent in the diffusion term could also be considered. A development of the methods proposed by \citet{bolin2019} and \citet{vabishchevich2015} should be explored.

\backmatter

\bmhead{Acknowledgments}

We are grateful to the O.I.E. center of Mines Paris, especially to Yves-Marie Saint-Drenan, Philippe Blanc and Hadrien Verbois, for providing the data and for inspiring discussions about renewable resources evaluation. We thank the Mines Paris / INRAE chair ``Geolearning'' for its constant support. We are thankful to the STSDS department of King Abdullah University of Science and Technology (KAUST), and especially to Professor Marc Genton, for the insightful work carried out during a stay in Saudi Arabia.

\clearpage
\begin{appendices}


\section{Relation between advection-diffusion SPDE and infinite dimensional SDE}
\label{sec:daprato}

Let us consider the advection-diffusion SPDE \eqref{eq:adv_diff}, here recalled:

\begin{gather}
    \left[\frac{\partial}{\partial t} + \frac{1}{c}(\kappa^2  - \nabla \cdot \Hbold\nabla)^{\alpha} + \frac{1}{c}\gammabold \cdot \nabla \right] X(t,\sbold) = \frac{\tau}{\sqrt{c}} Z(t,\sbold), \label{eq:adv_diff_app}    \\
    Z(t,\mathbf{s}) = W_T(t) \otimes  Z_S(\mathbf{s}).\label{eq:noise_app}
\end{gather}

The spatio-temporal forcing as defined in Equation \eqref{eq:noise_app} is a GeRF on functions of $L^2([0,T]) \times L^2(\bbR^d)$. 
Then, for any $(\phi_T, \phi_S),(\psi_T, \psi_S) \in L^2([0,T]) \times L^2(\bbR^d)$, $(W_T\otimes Z_S)(\phi_T, \phi_S)$ and $(W_T\otimes Z_S)(\psi_T, \psi_S)$ are centered Gaussian random variables with 

\begin{equation*}
\cov[(W_T\otimes Z_S)(\phi_T, \phi_S),(W_T\otimes Z_S)(\psi_T, \psi_S)]=\int_0^T \phi_T \psi_T \dd{t} \int_{\bbR^d} Z_S(\phi_S) Z_S(\psi_S) \dd{\sbold}.
\end{equation*}

The space-time forcing term $ W_T \otimes  Z_S$ can be identified with a cylindrical Wiener process $\{\widetilde{W}_t\}_{t\in[0,T]}$ in $L^2(\bbR^d)$ through
$$\widetilde{W}_t(\phi_S) = (W_T\otimes Z_S)(\mathbb{1}_{[0,t]}, \phi_S), \qquad \phi_S\in L^2(\bbR^d), \quad t \in [0,T],$$
where $\mathbb{1}_{[0,t]}$ is the indicator function over $[0,t]$.
Moreover, we can write
$$(W_T\otimes Z_S)(\phi_T, \phi_S) = \int_{\bbR^d} \left(\int_0^T \phi_T\dd{\widetilde{W}_t} \right)\phi_S \dd{\sbold} \qquad \phi_S\in L^2(\bbR^d), \quad t \in [0,T],$$
which leads to the interpretation of $W_T\otimes Z_S$ as the time derivative of the cylindrical Wiener process $\{\widetilde{W}_t\}_{t\in[0,T]}$.
This analogy allows us to rewrite the SPDE \eqref{eq:adv_diff_app} in the SDE form in infinite dimensions \citep{daprato1992}  

$$dX = - \frac{1}{c}\left[(\kappa^2  - \nabla \cdot \Hbold\nabla)^{\alpha} + \gammabold \cdot \nabla \right] X\dd{t} + \frac{\tau}{\sqrt{c}} \dd{\widetilde{W}_t}.$$

\section{Proof of Proposition \ref{prop:spatial_trace}}\label{sec:A0}

We present here the proof of Proposition \ref{prop:spatial_trace}.

\begin{proof}

The covariance function of the spatial trace between $X(t,\sbold)$ and $X(t,\sbold')$ for a spatial lag $\hbold=\sbold-\sbold'$ does not depend on the imaginary part of the spatial symbol function \citep{vergara2022general}, hence it can be written as
\begin{align}
    \cov(0,\hbold) & = \int_{\bbR^d} \int_{\bbR} \exp(i\hbold\xibold)S(\xibold,\omega) \dd{\omega} \dd{\xibold} \nonumber \\
    & = \int_{\bbR^d} \exp(i\hbold\xibold) \left[ \int_{\bbR} S(\xibold,\omega) \dd{\omega}\right] \dd{\xibold} \nonumber\\
    & = \int_{\bbR^d} \exp(i\hbold\xibold) S_S(\xibold) \dd{\xibold} ,
    \label{eq:covariance}
\end{align}
where $S(\xibold,\omega)$ is the spectral density defined as
$$ S(\xibold,\omega) = \frac{\tau^2}{(2\pi)^{(d+1)}\left[\omega^2 + c^{-2}(\kappa^2 + \xibold^\top\Hbold\xibold)^{2\alpha}\right]c(\kappa^2 + \xibold^\top\Hbold\xibold)^{\alpha_S}}.$$
Integrating over $\omega$, we obtain the spatial spectral density 
\begin{align}
S_S(\xibold) & = \frac{\tau^2}{(2\pi)^d c(\kappa^2+\xibold^\top\Hbold\xibold)^{\alpha_S}} \int_{\bbR} \frac{1}{2\pi\left[\omega^2 + c^{-2} (\kappa^2+\xibold^\top\Hbold\xibold)^{2\alpha}\right]} \dd{\omega} \nonumber \\
 & = \frac{\tau^2}{(2\pi)^d c(\kappa^2+\xibold^\top\Hbold\xibold)^{\alpha_S}} \frac{1}{2\left[c^{-2}(\kappa^2+\xibold\Hbold\xibold)^{2\alpha}\right]^{1/2}}\nonumber \\
 & = \frac{\tau^2}{2(2\pi)^d (\kappa^2+\xibold^\top\Hbold\xibold)^{\alpha_{\text{tot}}}}.
 \label{eq:spectrum}
\end{align}
Using the change of variable $\xibold=\kappa\Hbold^{-1/2}\wbold$ and plugging Equation \eqref{eq:spectrum} into \eqref{eq:covariance}, we obtain
\begin{align*}
\cov(0,\hbold) & = \frac{\tau^2}{2} \int_{\bbR^d} \frac{e^{i \hbold\xibold}}{(2\pi)^d (\kappa^2+\xibold^\top\Hbold\xibold)^{\alpha_{\text{tot}}}} \dd{\xibold}\\
& = \frac{\tau^2}{2} \int_{\bbR^d} \frac{e^{i \hbold\kappa\Hbold^{-1/2}\wbold}\lvert\kappa\Hbold^{-1/2}\rvert}{(2\pi)^d (\kappa^2+\kappa^2\wbold^\top\wbold)^{\alpha_{\text{tot}}}} \dd{\wbold}\\
& = \frac{\tau^2}{2\kappa^{2(\alpha_{\text{tot}}-d/2)}\lvert\Hbold
\rvert^{1/2}} \int_{\bbR^d} \frac{e^{i \hbold\kappa\Hbold^{-1/2}\wbold}}{ (2\pi)^d(1+\wbold^\top\wbold)^{\alpha_{\text{tot}}}} \dd{\wbold}\\
& = \frac{\tau^2\Gamma(\alpha_{\text{tot}}-d/2)}{2\Gamma(\alpha_{\text{tot}})(4\pi)^{d/2}\kappa^{2(\alpha_{\text{tot}}-d/2)}\lvert\Hbold
\rvert^{1/2}}
C^M_{\alpha_{\text{tot}}-d/2}\left(\kappa \norm{\Hbold^{-1/2}\hbold}\right).
\end{align*}
The last result comes from the computation of $\int_{\bbR^d} (1+\wbold^\top\wbold)^{-\alpha_{\text{tot}}}\dd{\wbold}$ with polar coordinates.
\end{proof}

\section{Discretization of spatio-temporal advection-diffusion SPDE}\label{sec:A1}

We detail here the discretization scheme of the advection-diffusion spatio-temporal SPDE  \eqref{eq:adv_diff}. 

For the sake of a clearer exposition, we set $\Hbold=\Ibold$, $\alpha=1$ and we consider a spatio-temporal white noise $Z(t,\mathbf{s})=W(t,\mathbf{s})$. The proof for the general case follows exactly the same lines as the proof below. The considered SPDE is
\begin{equation}
    \left[\frac{\partial}{\partial t} + \frac{1}{c}(\kappa^2 - \Delta) + \frac{1}{c}\gammabold \cdot \nabla \right] X(t,\sbold) = \frac{\tau}{\sqrt{c}} W(t,\sbold).
    \label{eq:diff_adv_spde}
\end{equation}

For the discretization of the temporal derivative in Equation \eqref{eq:diff_adv_spde}, we opt for the implicit Euler scheme, which considers the differential equation
$$\frac {\partial X(t)}{\partial t}=f(t,X),$$
with initial value $X^{(0)}=X(t_0)$. The method produces a sequence $\{X^{(k)}\}_{k=0}^{N_T}$, such that $X^{(k)}$ approximates $X(t_{0}+k dt)$, where $dt$ is the time step size. The approximation reads
\begin{equation}
    X^{(k+1)}=X^{(k)}+ dt f(t^{(k+1)},X^{(k+1)}).
\label{eq:implic_euler}
\end{equation}
In the specific case of Equation \eqref{eq:diff_adv_spde}, the implicit Euler discretization step reads

\begin{equation}
X^{(k+1)}(\sbold) -X^{(k)}(\sbold) + dt\left[\frac{1}{c} (\kappa^2-\Delta) + \frac{1}{c}\gammabold \cdot \nabla \right]X^{(k+1)}(\sbold)= \frac{\sqrt{dt}\tau}{\sqrt{c}} W^{(k+1)}_{S}(\sbold),
\label{eq:spde_discr_temp}
\end{equation} 
where $W^{(k+1)}_{S}(\sbold)$ is a spatial white noise obtained by integrating out the temporal white noise.

For ease of notation, we denote $X^{(k+1)}=X^{(k+1)}(\sbold)$, $X^{(k)}=X^{(k)}(\sbold)$ and $W_{S}=W^{(k+1)}_{S}(\sbold)$, since the spatial noise is independent of the temporal step $k$.

At each time step of the temporal discretization, a spatial FEM method is applied. In our case, we use the continuous Galerkin with Neumann boundary condition.  The weak form of Equation \eqref{eq:spde_discr_temp} is
\begin{equation}
\begin{split}
\int_\Omega  X^{(k+1)} v\dd{\sbold} &+ \frac{dt}{c}\left(\int_\Omega\kappa^2 X^{(k+1)} v \dd{\sbold} - \int_\Omega  \Delta X^{(k+1)} v \dd{\sbold}
+ \int_\Omega \gammabold  \cdot \nabla X^{(k+1)} v \dd{\sbold} \right) \\
&= \int_\Omega X^{(k)}v \dd{\sbold} + \frac{\sqrt{dt}\tau}{\sqrt{c}} \int_\Omega vW_{S}(\dd{\sbold}), \quad \forall v\in \mathcal{V},
\end{split}
\label{eq:weak_form}
\end{equation}
where $\mathcal{V}$ is the Hilbert space in which we search the solution and $W_{S}(v)$ is the white noise applied to the test function $v$.

By applying Green's first identity, i.e., by writing
$$\int_\Omega  \Delta X^{(k+1)}v \dd{\sbold} = - \int_\Omega  \nabla X^{(k+1)} \cdot \nabla v \dd{\sbold} +  \int_{\partial\Omega} v \cdot(\nabla X^{(k+1)} \cdot \hat{\mathbf{n}}) \dd{\sigma},$$
with $\hat{\mathbf{n}}$ being the normal vector on the boundary, and by simplifying the second term thanks to the Neumann boundary condition, we obtain
\begin{equation*}
\begin{split}
&\underbrset{ \mathcal{A}(X^{(k+1)},v) }{\int_\Omega  X^{(k+1)} v \dd{\sbold} + \frac{dt}{c}\left(\int_\Omega  \kappa^2 X^{(k+1)} v \dd{\sbold} + \int_\Omega  \nabla X^{(k+1)} \cdot \nabla v \dd{\sbold} + \int_\Omega \gammabold  \cdot \nabla X^{(k+1)} v \dd{\sbold} \right)}\\
&\phantom{Xxxxxxxxxxxxx}=\underbrset{ \mathcal{C}(X^{(k)},v) }{\int_\Omega X^{(k)}v \dd{\sbold}} + \underbrset{ \mathcal{E}(v) }{\frac{\sqrt{dt}\tau}{\sqrt{c}} \int_\Omega vW_{S}(\dd{\sbold})}, \quad \forall v\in \mathcal{V}.
\end{split}
\end{equation*}

Let $\mathcal{V}_h$ be the space of finite element solutions spanned by the basis functions $\{\psi_i\}_{i=1}^{N_S}$.
The Galerkin method allows us to find an approximated solution $X^{(k+1)}_{h} \in \mathcal{V}_h \subset \mathcal{V}$ to the SPDE, such that
\begin{equation}
    \mathcal{A}(X^{(k+1)}_{h},v_h) =\mathcal{C}(X^{(k)}_{h},v_h) + \mathcal{E}(v_h) \quad \forall v_h\in \mathcal{V}_h.
    \label{eq:gen_galerkin}
\end{equation}

The functions $X^{(k+1)}_{h}$, $X^{(k)}_{h}$ and $v_h$ are linear combinations of the basis functions, with
$$X^{(k+1)}_{h} = \sum_{i=1}^{N_S} x^{(k+1)}_{i} \psi_i; \quad X^{(k)}_{h} = \sum_{i=1}^{N_S} x^{(k)}_{i} \psi_i; \quad v_h= \sum_{i=1}^{N_S} v_{i} \psi_i.$$
Because of the linearity in the first argument of $\mathcal{A}(\cdot,\cdot)$ and $\mathcal{C}(\cdot,\cdot)$, we get
\begin{equation}
\sum_{i=1}^{N_S} \mathcal{A}(\psi_i, v_h)x^{(k+1)}_{i} = \sum_{i=1}^{N_S} \mathcal{C}(\psi_i, v_h)x^{(k)}_{i} + \mathcal{E}(v_h), \quad \forall v_h\in \mathcal{V}_h,
\label{eq:fem_form}
\end{equation}
where
\begin{align*}
&\mathcal{A}(\psi_i, v_h) = \mathcal{M}(\psi_i, v_h) + \frac{dt}{c}\left(\mathcal{K}(\psi_i,v_h) + \mathcal{B}(\psi_i,v_h)\right)\\
&\mathcal{C}(\psi_i, v_h)= \mathcal{M}(\psi_i, v_h),
\end{align*}
with $\mathcal{K}(\psi_i, v_h) = \kappa^2 \mathcal{M}(\psi_i, v_h) + \mathcal{G}(\psi_i, v_h)$. Here, $\mathcal{M}$ and $\mathcal{G}$ are the mass and stiffness operators, respectively $\mathcal{M}(v,w)=\int_\Omega vw \dd{\sbold}$ and $\mathcal{G}(v,w)=\int_\Omega  \nabla v\cdot \nabla w \dd{\sbold}$. $\mathcal{B}$ is the advection operator, i.e., $\mathcal{B}(v,w)=\int_\Omega \gammabold \cdot \nabla v w \dd{\sbold}$. Finally, $\mathcal{E}$ is the operator of the form $\mathcal{E}(v)=\frac{\sqrt{dt}\tau}{\sqrt{c}}\int_\Omega vW_{S}(\dd{\sbold})$.

Since any $v_h$ can be written as a linear combination of basis functions, the formulation \eqref{eq:fem_form} is equivalent to
\begin{equation}
\sum_{i=1}^{N_S} \mathcal{A}(\psi_i, \psi_j)x^{(k+1)}_{i} = \sum_{i=1}^{N_S} \mathcal{C}(\psi_i, \psi_j)x^{(k)}_{i} + \mathcal{E}(\psi_j), \quad \forall j.
\label{eq:fem_form_j}
\end{equation}

We define $\Mbold=[M_{ij}]_{i,j=1}^{N_S} = [\mathcal{M}(\psi_i,\psi_j)]_{i,j=1}^{N_S}$, $\Gbold=[G_{ij}]_{i,j=1}^{N_S} = [\mathcal{G}(\psi_i,\psi_j)]_{i,j=1}^{N_S}$, $\Bbold=[B_{ij}]_{i,j=1}^{N_S} = [\mathcal{B}(\psi_i, \psi_j)]_{i,j=1}^{N_S}$ the mass, stiffness and advection matrices, respectively.

$\mathcal{E}(\psi_j)$ is a Gaussian random variable with expectation 0 and covariance equal to 
\begin{eqnarray*}
\cov(\mathcal{E}(\psi_i), \mathcal{E}(\psi_j)) & = & \frac{dt\tau^2}{c}\cov\left[W_S(\psi_i),W_S(\psi_j)\right]\\
& = & \frac{dt\tau^2}{c}\int_\Omega \psi_i \psi_j \dd{\sbold} = \frac{dt\tau^2}{c}M_{ij},
\end{eqnarray*}
by following the definition of white noise in Equation \eqref{eq:white_noise}.

If $\mathbf{z}^{(k+1)}$ is a $(N_S)$-Gaussian vector such that $\mathbf{z}^{(k+1)} \sim \mathcal{N}(\0bold,\Ibold_{N_S})$, $\xbold^{(k+1)}$ is the vector containing the values $\{x^{(k+1)}_{i}\}_{i=1}^{N_S}$ and $\xbold^{(k)}$ is the vector containing the values $\{x^{(k)}_{i}\}_{i=1}^{N_S}$, then the sparse linear system corresponding to Equation \eqref{eq:fem_form_j} reads
\begin{equation}
\Mbold\mathbf{x}^{(k+1)} + \frac{dt}{c}(\Kbold+\Bbold) \mathbf{x}^{(k+1)}= \Mbold\mathbf{x}^{(k)} + \frac{\sqrt{dt}\tau}{\sqrt{c}}\Mbold^{1/2} \mathbf{z}^{(k+1)},
\label{eq:schema_spde_adv_reac_diff}
\end{equation} 
where $\Kbold=\kappa^2 \Mbold + \Gbold$ and $\Mbold^{1/2}$ is any matrix such that $\Mbold^{1/2}\Mbold^{1/2} = \Mbold$. 

When the spatial noise is colored, i.e. $Z_S(\sbold)$, the right-hand side operator $\mathcal{E}(v)$ becomes $\mathcal{E_S}(v)$, defined as
$$\mathcal{E_S}(v)=\frac{\sqrt{dt}\tau}{\sqrt{c}}\int_\Omega vZ_S(\dd{\sbold})$$
and it satisfies
$$\mathcal{E_S}(v_h) = \sum_{i=1}^{N_S}\mathcal{M}(\psi_i,v_h)z_{S,i}.$$
Hence,
$$\sum_{i=1}^{N_S} \mathcal{A}(\psi_i, \psi_j)x^{(k+1)}_{i} = \sum_{i=1}^{N_S} \mathcal{C}(\psi_i, \psi_j)x^{(k)}_{i} + \sum_{i=1}^{N_S}\mathcal{M}(\psi_i,\psi_j)z_{S,i}, \quad \forall j.$$
If $\zbold_S = \{z_{S,i}\}_{i=1}^{N_S}$ has precision matrix equal to $\Qbold_S$, then the sparse linear system is
\begin{equation}
\Mbold\mathbf{x}^{(k+1)} + \frac{dt}{c}(\Kbold+\Bbold) \mathbf{x}^{(k+1)}= \Mbold\mathbf{x}^{(k)} + \frac{\sqrt{dt}\tau}{\sqrt{c}}\Mbold \Lbold_S^\top \mathbf{z}^{(k+1)},
\label{eq:schema_spde_adv_reac_diff_spat}
\end{equation} 
where $\mathbf{z}^{(k+1)}\sim\mathcal{N}(\0bold,\Ibold_{N_S})$ and $\Lbold_S$ is the Cholesky decomposition of $\Qbold_{S}^{-1}$.

\begin{remark}
When the diffusion term includes an anisotropy matrix $\Hbold$, i.e., when $\Delta$ is replaced by $\nabla \cdot \Hbold\nabla$, the stiffness operator becomes $\mathcal{G}(v,w)=\int_\Omega  \Hbold \nabla v\cdot \nabla w \dd{\sbold}$, and the stiffness matrix changes consequently.
\end{remark}


\section{Proof of Proposition \ref{prop:global_prec}}\label{sec:A3}
We present here the proof of Proposition \ref{prop:global_prec}.

\begin{proof}
Let us denote $\xbold_{0:N_T}=[\mathbf{x}^{(0)},\dots,\mathbf{x}^{(N_T)}]^\top$ the vector containing all spatial solutions until time step $N_T$. Then, $$\xbold_{0:N_T}=\Rbold \begin{pmatrix}\mathbf{x}^{(0)}\\ \zbold_{1:N_T}\end{pmatrix},$$
with $\zbold_{1:N_T}=[\mathbf{z}^{(1)},\dots,\mathbf{z}^{(N_T)}]^\top$ and 
$$\Rbold = \begin{pmatrix}
\Ibold_{N_S}   & 0   & 0 & 0  &\dots& 0 \\
\Dbold & \Ebold & 0 &0  & \dots &0 \\
\Dbold^2 & \Dbold \Ebold & \Ebold & 0& \dots& 0\\
\vdots &\ddots &\ddots &\ddots & \ddots &  \vdots \\
\vdots &\ddots &\ddots &\ddots & \ddots &  0 \\
\vdots &\ddots &\ddots & \Dbold^2 & \Dbold &  \Ebold \\
\end{pmatrix}.$$

$\Rbold$ has a block structure which allows easy computation of its inverse
$$\Rbold^{-1} = \begin{pmatrix}
\Ibold_{N_S}   & 0   & 0 & 0  &\dots& 0 \\
-\Ebold^{-1}\Dbold & \Ebold^{-1} & 0  & 0 & \dots &0 \\
0 & -\Ebold^{-1}\Dbold & \Ebold^{-1} & 0 &\dots& 0\\
\vdots &\ddots &\ddots &\ddots & \ddots & \vdots \\
\vdots &\ddots &\ddots &\ddots & \ddots & 0 \\
0 &\dots &\dots & 0 & -\Ebold^{-1}\Dbold & \Ebold^{-1} \\
\end{pmatrix}.$$

The precision matrix of $\xbold_{0:N_T}$ is thus
$$\Qbold={\Rbold^{-1}}^\top\begin{pmatrix}
\Sigmabold^{-1} & 0 & \dots & 0\\
0  & \Ibold_{N_S} &\dots & 0\\
\vdots & \ddots & \ddots & \vdots\\
0 & 0 & \dots & \Ibold_{N_S}
\end{pmatrix} \Rbold^{-1}.$$

By denoting $\Fbold=\Ebold \Ebold^\top$, the global precision matrix reads

$$
\scalemath{0.85}{
\Qbold = \begin{pmatrix}
\Sigmabold^{-1}+\Dbold^\top\Fbold^{-1}\Dbold   & -\Dbold^\top\Fbold^{-1}   & 0 & \dots & 0 \\
-\Fbold^{-1}\Dbold & \phantom{X} \Fbold^{-1}+\Dbold^\top\Fbold^{-1}\Dbold \phantom{X} & -\Dbold^\top\Fbold^{-1}  & \ddots & \vdots \\
\vdots & \ddots & \ddots & \ddots & 0 \\
\vdots & \ddots & -\Fbold^{-1}\Dbold & \phantom{X}\Fbold^{-1}+\Dbold^\top\Fbold^{-1}\Dbold \phantom{X} & -\Dbold^\top\Fbold^{-1} \\
0 & \dots  & 0 & -\Fbold^{-1}\Dbold & \Fbold^{-1}
\end{pmatrix}.}
$$

By replacing the values of $\Dbold$ and $\Fbold$ and by defining $\Jbold = \left[\Mbold + \frac{dt}{c}(\Kbold +\Bbold + \Sbold)\right]$, we obtain
$$
\scalemath{0.65}{
\Qbold = \frac{c}{\tau^2 dt}\begin{pmatrix}
\Sigmabold^{-1}+\Qbold_S  & -\Qbold_S\Mbold^{-1}\Jbold   & 0 & \dots & 0 \\
-\Jbold^\top\Mbold^{-1}\Qbold_S & \phantom{X} \Jbold^\top \Mbold^{-1}\Qbold_S\Mbold^{-1}\Jbold + \Qbold_S \phantom{X} & -\Qbold_S\Mbold^{-1}\Jbold  & \ddots & \vdots \\
\vdots & \ddots & \ddots & \ddots & 0 \\
\vdots & \ddots & -\Jbold^\top\Mbold^{-1}\Qbold_S &  \phantom{X} \Jbold^\top \Mbold^{-1}\Qbold_S\Mbold^{-1}\Jbold + \Qbold_S \phantom{X} & -\Qbold_S\Mbold^{-1}\Jbold \\
0 & \dots  & 0 & -\Jbold^\top\Mbold^{-1}\Qbold_S & \Jbold^\top \Mbold^{-1}\Qbold_S\Mbold^{-1}\Jbold
\end{pmatrix}.}
$$  
\end{proof}

\end{appendices}

\clearpage
\bibliographystyle{apalike}
\bibliography{sn-bibliography_new}

\end{document}